\providecommand{\U}[1]{\protect\rule{.1in}{.1in}}
\newcounter{algo}
\newcounter{rem}
\newenvironment{algo}[2]{\refstepcounter{algo}\label{#2}   \begin{center}
\begin{minipage}{0.9\textwidth}   \hrule\smallskip
\textbf{Algorithm \thealgo: #1}
\par\smallskip\hrule\smallskip\ignorespaces}{\par\smallskip\hrule
\end{minipage}
\end{center}
}
\newtheorem{theorem}{Theorem}
\newtheorem{proposition}{Proposition}
\newtheorem{definition}{Definition}
\newtheorem{lemma}{Lemma}
\newtheorem{corollary}{Corollary}
\newenvironment{proof}[1][Proof]{\noindent \textbf{#1.} }{\qedsymbol}
\newcommand{\qedsymbol}{\hspace{\fill}\rule{1.5ex}{1.5ex}}
\def\baselinestretch{1.30}
\begin{document}

\title{\vspace{-1.5cm}
{\Huge Asynchronous Iterative Waterfilling for Gaussian }{\normalsize{} }{\Huge Frequency-Selective
Interference Channels}}

\author{Gesualdo Scutari$^{1}$, Daniel P. Palomar$^{2}$, and Sergio Barbarossa$^{1}$
\\
{\small E-mail: $\{$scutari, sergio$\}$@infocom.uniroma1.it, palomar@ust.hk}{\normalsize }\\
{\normalsize{} $^{1\text{ }}$}{\small Dpt. INFOCOM, Univ. of Rome
{}``La Sapienza\textquotedblright, Via Eudossiana 18, 00184 Rome,
Italy.}{\normalsize{} }\\
{\normalsize{} $^{2}$ }{\small Dpt. of Electronic and Computer Eng.,
Hong Kong Univ. of Science and Technology, Hong Kong.}}

\date{{\small Submitted to IEEE }\textit{\small Transactions on Information
Theory}{\small , August 22, 2006.}{\normalsize }\\
{\normalsize{} }{\small Revised September 25, 2007. Accepted January
14, 2008.\thanks{Part of this work was presented in \textit{IEEE Workshop
on Signal Processing Advances in Wireless Communications, (SPAWC-2006)},
July 2-5, 2006, and in \textit{Information Theory and Applications
(ITA) Workshop}, Jan. 29 - Feb. 2, 2007. This work was supported by
the SURFACE project funded by the European Community under Contract
IST-4-027187-STP-SURFACE.}}}

\maketitle
\vspace{-1cm}

\begin{abstract}
This paper considers the maximization of information rates for the
Gaussian frequency-selective interference channel, subject to power
and spectral mask constraints on each link. To derive decentralized
solutions that do not require any cooperation among the users, the
optimization problem is formulated as a static noncooperative game
of complete information. To achieve the so-called Nash equilibria
of the game, we propose a new distributed algorithm called asynchronous
iterative waterfilling algorithm. In this algorithm, the users update
their power spectral density in a completely distributed and asynchronous
way: some users may update their power allocation more frequently
than others and they may even use outdated measurements of the received
interference. The proposed algorithm represents a unified framework
that encompasses and generalizes all known iterative waterfilling
algorithms, e.g., sequential and simultaneous versions. 
The main result of the paper consists of a unified set of conditions that guarantee the
global converge of the proposed algorithm to the (unique) Nash equilibrium
of the game. 
\bigskip{}

\textbf{Index Terms:} Game theory, Gaussian frequency-selective interference channel, Nash equilibrium, totally asynchronous
algorithm, iterative waterfilling algorithm.
\end{abstract}
\vspace{-0.6cm}

\section{Introduction and Motivation}

In this paper we focus on the frequency selective interference channel
with Gaussian noise. 
The capacity region of the interference channel is still unknown.
Only some bounds are available (see, e.g., \cite{Han-Kobayashi,Meulen}
for a summary of the known results about the Gaussian interference
channel). A pragmatic approach that leads to an achievable region
or inner bound of the capacity region is to restrict the system to
operate as a set of independent units, i.e., not allowing multiuser
encoding/decoding or the use of interference cancelation techniques.
This achievable region is very relevant in practical systems with
limitations on the decoder complexity and simplicity of the system.
With this assumption, multiuser interference is treated as noise and
the transmission strategy for each user is simply his power allocation.
The system design reduces then to finding the optimum Power Spectral
Density (PSD) for each user, according to a specified performance
metric.

Within this context, existing works \cite{Cendrillon-Yu}$-$\cite{Tse}
considered the maximization of the information rates of all the links,
subject to transmit power and (possibly) spectral mask constraints
on each link. The latter constraints are especially motivated in adaptive
scenarios,  e.g., cognitive radio, where previously allocated spectral
bands may be reused, but provided that the generated interference
falls below specified thresholds \cite{Haykin}. In \cite{Cendrillon-Yu,Yu-Lui},
a centralized approach based on duality theory  was proposed
to compute, under technical conditions, the largest achievable rate
region of the system (i.e., the Pareto-optimal set of the achievable
rates). Our interest, in this paper, is focused on finding distributed
algorithms with no centralized control and no cooperation among the
users. Hence, we cast the system design under the convenient framework
of game theory. In particular, we formulate the rate maximization
problem as a strategic non-cooperative game of complete information,
where every link is a player that competes against the others by choosing
the spectral power allocation that maximizes his own information rate.
An equilibrium for the whole system is reached when every player is
unilaterally optimum, i.e., when, given the current strategies of
the others, any change in his own strategy would result in a rate
loss. This equilibrium constitutes the celebrated notion of Nash Equilibrium
(NE) \cite{Nash-paper}.

The Nash equilibria of the rate maximization game can be reached using
Gaussian signaling and a proper PSD from each user \cite{Scutari-Palomar-Barbarossa-ISIT}$-$\cite{Scutari_Thesis}.
To obtain the optimal PSD of the users, Yu, Ginis, and Cioffi proposed
the \textit{sequential} Iterative WaterFilling Algorithm (IWFA) \cite{Yu}
in the context of DSL systems, modeled as a Gaussian frequency-selective
interference channel. The algorithm is an instance of the Gauss-Seidel
scheme \cite{Bertsekas Book-Parallel-Comp}: the users maximize their
own information rates \emph{sequentially} (one after the other), according
to a fixed updating order. Each user performs the single-user waterfilling
solution given the interference generated by the others as additive
(colored) noise. The most appealing features of the sequential IWFA
are its low-complexity and distributed nature. In fact, to compute
the waterfilling solution, each user only needs to measure the noise-plus-interference
PSD, without requiring specific knowledge of the power allocations
and the channel transfer functions of all other users. The convergence
of the sequential IWFA has been studied in a number of works \cite{ChungISIT03},
\cite{Scutari-Barbarossa-GT_PII}$-$\cite{Tse}, each time obtaining
milder convergence conditions. However, despite its appealing properties,
the sequential IWFA may suffer from slow convergence if the number
of users in the network is large, just because of the sequential updating
strategy. In addition, the algorithm requires some form of central
scheduling to determine the order in which users update their PSD.

To overcome the drawback of slow speed of convergence, the \textit{simultaneous}
IWFA was proposed in \cite{Scutari-Palomar-Barbarossa-ISIT,Scutari-Barbarossa-GT_PII,Scutari_Thesis}.
The simultaneous IWFA is an instance of the Jacobi scheme \cite{Bertsekas Book-Parallel-Comp}:
at each iteration, the users update their own strategies \emph{simultaneously},
still according to the waterfilling solution, but using the interference
generated by the others in the \emph{previous} iteration. The simultaneous
IWFA was shown to converge to the unique NE of the rate maximization
game faster than the sequential IWFA and under weaker conditions on
the multiuser interference than those given in \cite{Yu,ChungISIT03},
\cite{Yamashitay-Luo}$-$\cite{Tse} for the sequential IWFA. Furthermore,
differently from \cite{Yu,ChungISIT03, Yamashitay-Luo, Tse},
the algorithm as proposed in \cite{Scutari-Barbarossa-GT_PII} takes explicitly into account the spectral masks constraints.
However, the simultaneous IWFA still requires some form of synchronization,
as all the users need to be updated simultaneously. Clearly, in a
real network with many users, the synchronization requirement of both
sequential and simultaneous IWFAs goes against the non-cooperation
principle and it might be unacceptable. 

This paper generalizes the existing results for the sequential and
simultaneous IWFAs and develops a unified framework based on the so-called
\emph{asynchronous} IWFA, that falls within the class of totally asynchronous
schemes of \cite{Bertsekas Book-Parallel-Comp}. In this more general
algorithm, all users still update their power allocations according
to the waterfilling solution, but the updates can be performed in
a \emph{totally asynchronous} way (in the sense of \cite{Bertsekas Book-Parallel-Comp}).
This means that some users may update their power allocations \emph{more
frequently} than others and they may even use an \emph{outdated} measurement
of the interference caused from the others. These features make the
asynchronous IWFA appealing for all practical scenarios, either wired
or wireless, as it strongly relaxes the need for coordinating the
users' updating schedule.

The main contribution of this paper is to derive sufficient conditions
for the global convergence of the asynchronous IWFA to the (unique)
NE of the rate maximization game. 
Interestingly, our convergence conditions are shown to be independent
of the users' update schedule. Hence, they represent a unified set
of conditions encompassing all existing algorithms, either synchronous
or asynchronous, that can be seen as special cases of our asynchronous
IWFA. Our conditions also imply that both sequential and simultaneous
algorithms are robust to situations where some users may fail to follow
their updating schedule. Finally, we show that our sufficient conditions
for the convergence of the asynchronous IWFA coincide with those given
recently in \cite{Scutari-Barbarossa-GT_PII} for the convergence
of the (synchronous) sequential and simultaneous IWFAs, and are larger than  
conditions obtained in \cite{Yu,ChungISIT03}, \cite{Yamashitay-Luo}$-$\cite{Tse} for the convergence of the sequential IWFA in the absence of spectral mask constraints.

The paper is organized as follows. Section \ref{Sec:System.Model}
provides the system model and formulates the optimization problem
as a strategic non-cooperative game. Section \ref{Sec:AIWFA} contains
the main result of the paper: the description of the proposed asynchronous
IWFA along with its convergence properties. Section \ref{Sec:two examples}
recovers the sequential and simultaneous IWFAs as special cases of
the asynchronous IWFA and then, as a by product, it provides a unified
set of convergence conditions for both algorithms. Finally, Section \ref{Sec:Conclusion}
draws some conclusions.

\vspace{-0.3cm}

\section{System Model and Problem Formulation}

\label{Sec:System.Model} In this section we clarify the assumptions
and the constraints underlying the system model and we formulate the
optimization problem addressed in this paper explicitly. \vspace{-0.3cm}

\subsection{System model}

\label{Subsec:system_model} \vspace{-0.3cm}
 We consider a Gaussian frequency-selective interference channel composed
by multiple links. Since our goal is to find distributed algorithms
that require neither a centralized control nor a coordination
among the users, we focus on transmission techniques where interference
cancelation is not possible and multiuser interference is treated
by each receiver as additive colored noise. The channel frequency-selectivity
is handled, with no loss of optimality, adopting a multicarrier 
transmission strategy.%
\footnote{Multicarrier transmission is a capacity-lossless strategy for sufficiently large
block length \cite{Cover,Tse-book}.%
} Given the above system model, we make the following assumptions:

\noindent \textbf{A.1} Each channel changes sufficiently slowly to
be considered fixed during the whole transmission, so that the information
theoretic results are meaningful;

\noindent \textbf{A.2} The channel from each source to its own destination
is known to the intended receiver, but not to the other terminals;
each receiver is also assumed to measure with no errors the overall  PSD of
the noise plus interferences generated by the other users.
Based on this information, each receiver computes the optimal power
allocation across the frequency bins for its own transmitter and feeds
it back to its transmitter through a low bit rate (error-free) feedback
channel.%
\footnote{In practice, both measurements and feedback are inevitably affected
by errors. This scenario can be studied by extending our formulation
to games with partial information \cite{Osborne,Aubin-book}, but
this goes beyond the scope of the present paper.%
}

\noindent \textbf{A.3} All the users are block-synchronized
with an uncertainty at most equal to the cyclic prefix length. This
imposes a minimum length of the cyclic prefix that will depend on
the maximum channel delay spread. 

We consider the following constraints:

\noindent \textbf{Co.1} Maximum overall transmit power for each user:\vspace{-0.1cm}
 \begin{equation}
E\left\{ \left\Vert \mathbf{s}_{q}\right\Vert _{2}^{2}\right\} =\sum_{k=1}^{N}\bar{p}_{q}(k)\leq NP_{q},\quad\quad  q=1,\ldots,Q,\vspace{-0.1cm}\label{power-constraint}\end{equation}
where $\mathbf{s}_{q}$ contains the $N$ symbols transmitted by
user $q$ on the $N$ carriers, $\bar{p}_{q}(k)\triangleq E\left\{ \left\vert s_{q}(k)\right\vert ^{2}\right\} $
denotes the power allocated by user $q$ over carrier $k,$ and $P_{q}$
is power in units of energy per transmitted symbol. 

\noindent \textbf{Co.2} Spectral mask constraints: \begin{equation}
E\left\{ \left\vert s_{q}(k)\right\vert ^{2}\right\} =\bar{p}_{q}(k)\leq\bar{p}_{q}^{\max}(k),\quad\quad  k=1,\ldots,N,\quad q=1,\ldots,Q,\label{mask_on_F_q_}\end{equation}
 where $\bar{p}_{q}^{\max}(k)$ represents the maximum power that
is allowed to be allocated on the $k$-th frequency bin from the $q$-th
user. Constraints like (\ref{mask_on_F_q_}) are imposed to limit
the amount of interference generated by each transmitter over pre-specified
bands.

The main goal of this paper is to obtain the optimal vector power
allocation $\mathbf{p}_{q}\triangleq (p_{q}(1),\ldots,p_{q}(N))$, for each
user, according to the optimality criterion introduced in the next
section.\vspace{-0.2cm}

\subsection{Problem formulation as a game}

\vspace{-0.1cm}
 \label{Sec:Problem-Formulation} 

We consider a strategic non-cooperative game \cite{Osborne,Aubin-book},
in which the players are the links and the payoff functions are the
information rates on each link: Each player competes rationally%
\footnote{The rationality assumption means that each user will never chose a
strictly dominated strategy. A strategy profile $\mathbf{x}_{q}$
is strictly dominated by $\mathbf{z}_{q}$ if $\Phi_{q}\left(\mathbf{x}_{q},\mathbf{y}_{-q}\right)<\Phi_{q}\left(\mathbf{z}_{q},\mathbf{y}_{-q}\right),$
for a given admissible $\mathbf{y}_{-q}\triangleq\left(\mathbf{y}_{1},\ldots,\mathbf{y}_{q-1},\mathbf{y}_{q+1},\ldots,\mathbf{y}_{Q}\right),$
where $\Phi_{q}$ denotes the payoff function of player $q.$%
} against the others by choosing the strategy that maximizes his own
rate, given constraints \textbf{Co.1} and \textbf{Co.2}. A NE of the
game is reached when every user, given the strategy profile of the
others, does not get any rate increase by changing his own strategy.

Using the signal model described in Section \ref{Subsec:system_model},
the achievable rate for each player $q$ is computed as the maximum
information rate on the $q$-th link, assuming \emph{all the other
received signals as additive colored noise}. It is straightforward
to see that a (pure or mixed strategy) NE is obtained if each user
transmits using Gaussian signaling, with a proper PSD. In fact, for
each user, given that all other users use Gaussian codebooks, the
optimal codebook maximizing mutual information is also Gaussian \cite{Cover}.%
\footnote{Observe that, in general, Nash equilibria achievable using arbitrary
non-Gaussian codes may exist. In this paper, we focus only on transmission
using  Gaussian codebooks.%
} Hence, the maximum achievable rate for the $q$-th user is given
by \cite{Cover}\begin{equation}
R_{q}=\frac{1}{N}\sum_{k=1}^{N}\log\left(1+\mathsf{sinr}_{q}(k)\right),\label{Rate}\end{equation}
 with $\mathsf{sinr}_{q}(k)$ denoting the Signal-to-Interference
plus Noise Ratio (SINR) on the $k$-th carrier for the $q$-th link:
\begin{equation}
\mathsf{sinr}_{q}(k)\triangleq\frac{\left\vert \bar{H}_{qq}(k)\right\vert ^{2}\bar{p}_{q}(k)/d_{qq}^{\gamma}}{\sigma_{{q}}^{2}(k)+\sum_{\, r\neq q}\left\vert \bar{H}_{qr}(k)\right\vert ^{2}\bar{p}_{r}(k)/d_{rq}^{\gamma}}\triangleq\frac{\left\vert H_{qq}(k)\right\vert ^{2}p_{q}(k)}{\sigma_{q}^{2}(k)+\sum_{\, r\neq q}\left\vert H_{qr}(k)\right\vert ^{2}p_{r}(k)},\label{SINR_q}\end{equation}
 where 
$\bar{H}_{qr}(k)$ denotes the frequency-response of the channel between
source $r$ and destination $q$ excluding the path-loss $d_{qr}^{\gamma}$
with exponent $\gamma$ and $d_{qr}$ is the distance between source
$r$ and destination $q$; $\sigma_{{q}}^{2}(k)$ is the variance
of the zero mean circularly symmetric complex Gaussian noise at receiver
$q$ over the carrier $k$; and for the convenience of notation, we
have introduced the normalized quantities $H_{qr}(k)\triangleq\bar{H}_{qr}(k)\sqrt{P_{r}/d_{qr}^{\gamma}}$
and $p_{q}(k)\triangleq\bar{p}_{q}(k)/P_{q}\mathbf{\ }$.

Observe that in the case of practical coding schemes, where only finite
order constellations can be used, we can use the gap approximation
analysis \cite{Forney,Goldsmith-Chua} and write the number of bits
transmitted over the $N$ substreams from the $q$-th source still
as in (\ref{Rate}) but  replacing $\left\vert H_{qq}(k)\right\vert ^{2}$
in (\ref{SINR_q}) with $\left\vert H_{qq}(k)\right\vert ^{2}/\Gamma_{q},$where
$\Gamma_{q}\geq1$ is the gap. The gap depends only on the family
of constellation and on $P_{e,q}$; for $M$-QAM constellations, for
example, if the symbol error probability \ is approximated by $P_{e,q}(\mathsf{sinr}_{q}(k))\approx4\mathcal{Q}\left(\sqrt{3\,\mathsf{sinr}_{q}(k)/(M-1)}\right),$
the resulting gap is $\Gamma_{q}=(\mathcal{Q}^{-1}(P_{e,q}/4))^{2}/3$
\cite{Forney,Goldsmith-Chua}. 

In summary, we have a game with the following structure: \begin{equation}
{\mathscr{G}}=\left\{ \Omega,\{{\mathscr{P}}_{q}\}_{q\in\Omega},\{{R}_{q}\}_{q\in\Omega}\right\} ,\label{Game G}\end{equation}
 where $\ \Omega\triangleq\left\{ 1,2,\ldots,Q\right\} $ denotes
the set of the $Q$ active links, ${\mathscr{P}}_{q}$ is the set
of admissible (normalized) power allocation strategies, across the
$N$ available carriers, for the $q$-th player, defined as%
\footnote{In order to avoid the trivial solution $p_{q}^{\star}(k)=p_{q}^{\max}(k)$
for all $k$, $\sum_{k=1}^{N}p_{q}^{\max}({k})>N$ is assumed for
all $q\in\Omega$. Furthermore, in the feasible strategy set of each
player, we can replace, without loss of generality, the original \textit{inequality}
power constraint $(1/N)\ \sum_{k=1}^{N}p_{q}(k)\leq1,$ with equality$,$
since, at the optimum, this constraint must be satisfied with equality.%
}\begin{equation}
\hspace{-0.2cm}{\mathscr{P}}_{q}\!\triangleq\!\left\{ \mathbf{p}_{q}\!\!\in\!\!\mathcal{\ \mathbb{R}}^{N}:\frac{1}{N}\!\sum_{k=1}^{N}p_{q}(k)=1,\quad0\leq p_{q}(k)\leq p_{q}^{\max}(k),\quad k=1,\ldots,N\right\} ,\label{admissible strategy set}\end{equation}
 with $p_{q}^{\max}(k)\triangleq\overline{p}_{q}^{\max}(k)/P_{q}$
and ${R}_{q}$ is the payoff function of the $q$-th player, defined
in (\ref{Rate}).

The optimal strategy for the $q$-th player, given the power allocation
of the others, is then the solution to the following maximization
problem%
\footnote{In the optimization problem in (\ref{Rate Game}), any concave strictly
increasing function of the rate can be equivalently considered as
payoff function of each player. The optimal solutions of this new
set of problems coincide with those of (\ref{Rate Game}).%
} \begin{equation}
\begin{array}{l}
\operatorname*{maximize}\limits _{\mathbf{p}_{q}}\quad\ \dfrac{1}{N}{\displaystyle \sum\limits _{k=1}^{N}}\log\left(1+\text{ }\mathsf{sinr}_{q}(k)\right)\\
\operatorname*{subject}\text{ }\operatorname*{to}\text{\ \ \ }\mathbf{p}_{q}\in{\mathscr{P}}_{q}\end{array},\qquad\qquad\forall q\in\Omega\label{Rate Game}\end{equation}
 where $\mathsf{sinr}_{q}(k)$ and$\ {\mathscr{P}}_{q}$\ and are
given in (\ref{SINR_q}) and (\ref{admissible strategy set}), respectively.
Note that, for each $q$, the maximum in (\ref{Rate Game}) is taken
over $\mathbf{p}_{q},$ for a \textit{fixed} $\mathbf{p}_{-q}\triangleq\left(\mathbf{p}_{1},\ldots,\mathbf{p}_{q-1},\mathbf{p}_{q+1},\ldots,\mathbf{p}_{Q}\right).$

The solutions of (\ref{Rate Game}) are the well-known Nash Equilibria,
which are formally defined as follows. \begin{definition} \label{NE def}
A (pure) strategy profile $\mathbf{p}^{\star}=\left(\mathbf{p}_{1}^{\ast},\ldots,\mathbf{p}_{Q}^{\ast}\right)\in{\mathscr{P}}_{1}\times\ldots\times{\mathscr{P}}_{Q}$
\ is a Nash Equilibrium of the game ${\mathscr{G}}$ in $($\ref{Game G}$)$
if\vspace{-0.3cm}
 \begin{equation}
R_{q}(\mathbf{p}_{q}^{\star},\mathbf{p}_{-q}^{\star})\geq R_{q}(\mathbf{p}_{q},\mathbf{p}_{-q}^{\star}),\ \text{\ \ }\forall\mathbf{p}_{q}\in{\mathscr{P}}_{q},\text{ }\forall q\in\Omega.\label{pure-NE}\end{equation}
 \end{definition} 

Observe that, for the payoff functions defined in (\ref{Rate}), we
can indeed limit ourselves to adopt pure strategies w.l.o.g., as we
did in (\ref{Game G}), since every NE of the game is proved to be
achievable using pure strategies in \cite{Scutari-Barbarossa-GT_PI}.

According to (\ref{Rate Game}), all the (pure) Nash equilibria of the game, if
they exist, must satisfy the waterfilling solution \emph{for each}
user, i.e., the following system of \emph{nonlinear} equations: \begin{equation}
\begin{array}{c}
\mathbf{p}_{q}^{\star}=\mathsf{WF}_{q}\left(\mathbf{p}_{1}^{\star},\ldots,\mathbf{p}_{q-1}^{\star},\mathbf{p}_{q+1}^{\star},\ldots,\mathbf{p}_{Q}^{\star}\right)=\mathsf{WF}_{q}(\mathbf{p}_{-q}^{\star})\end{array},\quad\forall q\in\Omega,\label{sym_WF-sistem_mask}\end{equation}
 with the waterfilling operator $\mathsf{WF}_{q}\left(\mathbf{\cdot}\right)$
defined as \begin{equation}
\left[\mathsf{WF}_{q}\left(\mathbf{p}_{-q}\right)\right]_{k}\triangleq\left[\mu_{q}-\dfrac{\sigma_{{q}}^{2}(k)+\sum_{\, r\neq q}\left\vert H_{qr}(k)\right\vert ^{2}p_{r}(k)}{\left\vert H_{qq}(k)\right\vert ^{2}}\right]_{0}^{p_{q}^{\max}(k)},\quad k=1,\ldots,N,\label{WF_mask}\end{equation}
 where $\left[x\right]_{a}^{b}$ denotes the Euclidean projection
of $x$ onto the interval $[a,b]$.%
\footnote{The Euclidean projection $\left[x\right]_{a}^{b}$, with $a\leq b$, is defined as
follows: $\left[x\right]_{a}^{b}=a$, if $x\leq a$, $\left[x\right]_{a}^{b}=x$,
if $a<x<b$, and $\left[x\right]_{a}^{b}=b$, if $x\geq b$.%
} The water-level $\mu_{q}$ is chosen to satisfy the power constraint
$(1/N)\sum_{k=1}^{N}p_{q}^{\star}(k)=1.$

Observe that in the absence of spectral mask constraints (i.e., when
$p_{q}^{\max}(k)=+\infty$, $\forall q,$ $\forall k$), the Nash
equilibria of game ${\mathscr{G}}$ are given by the classical simultaneous
waterfilling solutions \cite{Yu,ChungISIT03}, where $\mathsf{WF}_{q}\left(\mathbf{\cdot}\right)$
in (\ref{sym_WF-sistem_mask}) is still obtained from (\ref{WF_mask})
simply setting $p_{q}^{\max}(k)=+\infty$, $\forall q,$ $\forall k.$
Interestingly, the presence of spectral mask constraints does not
affect the existence of a pure-strategies NE of game $\mathscr{G}$,
as stated in the following. \begin{proposition} \label{Existence}The
game ${\mathscr{G}}$ in (\ref{Game G}) always admits at least one
pure-strategy NE, for any set of channel realizations, power
and spectral mask constraints. \end{proposition}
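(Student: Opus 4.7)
The plan is to prove existence by invoking the classical concave $n$-person game existence theorem of Debreu, Glicksberg, and Fan, which asserts that a strategic game admits a pure-strategy NE whenever, for every player $q$, the strategy set $\mathscr{P}_q$ is a nonempty, convex, and compact subset of a Euclidean space and the payoff function $R_q$ is continuous on the joint strategy space $\mathscr{P}_1 \times \cdots \times \mathscr{P}_Q$ and concave (in particular, quasi-concave) in $\mathbf{p}_q$ for every fixed $\mathbf{p}_{-q}$. My task therefore reduces to verifying these three conditions for game $\mathscr{G}$ as defined in (\ref{Game G}).

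First, I would note that each $\mathscr{P}_q$ in (\ref{admissible strategy set}) is the intersection of a hyperplane with a box $\prod_{k=1}^N [0, p_q^{\max}(k)]$, so it is automatically closed, bounded, and convex; it is nonempty by the standing assumption (stated in the footnote attached to (\ref{Game G})) that $\sum_{k=1}^N p_q^{\max}(k) > N$, which guarantees that a feasible allocation satisfying the equality power constraint exists inside the box. Second, continuity of $R_q$ on $\mathscr{P}_1\times\cdots\times\mathscr{P}_Q$ is immediate from (\ref{Rate})--(\ref{SINR_q}) once one observes that $\sigma_q^2(k) > 0$ ensures the denominator of $\mathsf{sinr}_q(k)$ is bounded away from zero uniformly on the compact product of strategy sets.

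The only substantive step is concavity of $R_q$ in its own argument. Fixing $\mathbf{p}_{-q}$, each summand in (\ref{Rate}) takes the form $\log\bigl(1 + \alpha_{q,k}(\mathbf{p}_{-q})\, p_q(k)\bigr)$ where
\begin{equation*}
\alpha_{q,k}(\mathbf{p}_{-q}) \triangleq \frac{|H_{qq}(k)|^2}{\sigma_q^2(k) + \sum_{r\neq q} |H_{qr}(k)|^2 p_r(k)} > 0,
\end{equation*}
so each term is a concave function of the scalar $p_q(k)$, and $R_q(\cdot,\mathbf{p}_{-q})$ is therefore concave (indeed strictly concave) in $\mathbf{p}_q$ as a nonnegative sum of concave functions of disjoint coordinates. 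All hypotheses of the Debreu--Glicksberg--Fan theorem are met, so a pure-strategy NE exists.

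As a sanity check and alternative route, I would also point out that strict concavity of $R_q$ in $\mathbf{p}_q$ over the compact convex set $\mathscr{P}_q$ makes the best-response mapping $\mathbf{p}_{-q} \mapsto \mathsf{WF}_q(\mathbf{p}_{-q})$ single-valued and continuous (by Berge's maximum theorem), so existence can equivalently be obtained by applying Brouwer's fixed-point theorem to the joint waterfilling map defined in (\ref{WF_mask}); any fixed point of this map is, by (\ref{sym_WF-sistem_mask}), a pure-strategy NE. I do not anticipate any genuine obstacle here, since the game is essentially a prototypical concave game with polyhedral strategy sets; the only point requiring a brief remark is the nonemptiness of $\mathscr{P}_q$ under the spectral mask constraints, which is handled by the assumption $\sum_k p_q^{\max}(k) > N$.
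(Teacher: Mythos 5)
Your proposal is correct and follows essentially the same route as the paper, which simply invokes ``standard results of game theory'' (the Debreu--Glicksberg--Fan existence theorem for concave games) and defers the details to the companion paper \cite{Scutari-Barbarossa-GT_PI}; your verification of compactness, convexity, nonemptiness, continuity, and own-strategy concavity is exactly the content of that standard argument. The only cosmetic caveat is the parenthetical claim of \emph{strict} concavity, which requires $|H_{qq}(k)|^2>0$ on every carrier and is not needed for existence, since plain concavity already suffices for the fixed-point theorem.
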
 \begin{proof} The
proof comes from standard results of game theory \cite{Osborne,Aubin-book}
and it is given in \cite{Scutari-Barbarossa-GT_PI}. 
\end{proof}

In general, the game ${\mathscr{G}}$ may admit multiple equilibria, depending
on the level of multiuser interference \cite{Scutari-Barbarossa-GT_PI}.
In the forthcoming sections, we provide sufficient conditions ensuring
the uniqueness of the NE and we address the problem of how to reach
such an equilibrium in a totally distributed way.

\section{Asynchronous Iterative Waterfilling}

\label{Sec:AIWFA} To reach the NE of game ${\mathscr{G}}$, we propose
a totally asynchronous distributed iterative waterfilling procedure,
which we name asynchronous Iterative WaterFilling Algorithm. The proposed
algorithm can be seen as an instance of the totally asynchronous scheme
of \cite{Bertsekas Book-Parallel-Comp}: all the users maximize their
own rate in a \emph{totally asynchronous} way. 
More specifically, some users are allowed to update their strategy
more frequently than the others, and they might perform their updates
using \emph{outdated}  information about the interference caused from
the others. What we show is that the asynchronous IWFA converges to
a stable NE of game ${\mathscr{G}}$, \textit{whichever the updating
schedule is}, under rather mild conditions on the multiuser interference.
Interestingly, these conditions are also sufficient to guarantee the
\textit{uniqueness} of the NE.

To provide a formal description of the asynchronous IWFA, we need
to introduce some preliminary definitions. We assume, without any
loss of generality, that the set of times at which one or more users
update their strategies is the discrete set $\mathcal{T}=\mathbb{N}_{+}=\left\{ 0,1,2,\ldots\right\} .$
Let $\mathbf{p}_{q}^{(n)}$ denote the vector power allocation of
user $q$ at the $n$-th iteration, and let $\mathcal{T}_{q}\subseteq\mathcal{T}$
\ denote the set of times $n$ at which user $q$ updates his power
vector $\mathbf{p}_{q}^{(n)}$ (thus, implying that, at time $n\notin\mathcal{T}_{q},$
$\mathbf{p}_{q}^{(n)}$ is left unchanged). Let $\tau_{r}^{q}(n)$
denote the most recent time at which the interference from user $r$
is perceived by user $q$ at the $n$-th iteration (observe that $\tau_{r}^{q}(n)$
satisfies $0\leq\tau_{r}^{q}(n)\leq n$). Hence, if user $q$ updates
its power allocation at the $n$-th iteration, then it waterfills,
according to (\ref{WF_mask}), the interference level caused by the
power allocations of the others: \begin{equation}
\mathbf{p}_{-q}^{(\mathbf{\tau}^{q}(n))}\triangleq\left(\mathbf{p}_{1}^{(\tau_{1}^{q}(n))},\ldots,\mathbf{p}_{q-1}^{(\tau_{q-1}^{q}(n))},\mathbf{p}_{q+1}^{(\tau_{q+1}^{q}(n))},\ldots,\mathbf{p}_{Q}^{(\tau_{Q}^{q}(n))}\right).\label{p_q_interference}\end{equation}

The overall system is said to be totally asynchronous if the following
weak assumptions are satisfied for each $q$ \cite{Bertsekas Book-Parallel-Comp}:
A1) $0\leq\tau_{r}^{q}(n)\leq n$; A2) $\lim_{k\rightarrow\infty}\tau_{r}^{q}(n_{k})=+\infty$;
and A3) $\left\vert \mathcal{T}_{q}\right\vert =\infty$; where $\{n_{k}\}$
is a sequence of elements in $\mathcal{T}_{q}$ that tends to infinity.
Assumptions A1$-$A3 are standard in asynchronous convergence theory
\cite{Bertsekas Book-Parallel-Comp}, and they are fulfilled in any
practical implementation. In fact, A1 simply indicates that, at any
given iteration $n$, each user $q$ can use only the interference
vectors $\mathbf{p}_{-q}^{(\mathbf{\tau}^{q}(n))}$ allocated by the
other users in the previous iterations (to preserve causality). Assumption
A2 states that, for any given iteration index $n_{k},$ the values
of the components of $\mathbf{p}_{-q}^{(\mathbf{\tau}^{q}(n))}$ in
(\ref{p_q_interference}) generated prior to $n_{k},$ are not used
in the updates of $\mathbf{p}_{q}^{(n)}$, when $n$ becomes sufficiently
larger than $n_{k}$; which guarantees that old information is eventually
purged from the system. Finally, assumption A3 indicates that no user
fails to update its own strategy as time $n$ goes on.

Given game ${{\mathscr{G}},}$ let $\mathcal{D}_{q}^{\min}\subseteq\{1,\cdots,N\}$
denote the set $\{1,\ldots,N\}$ (possibly) deprived of the carrier
indices that user $q$ would never use as the best response set to
\emph{any} strategies adopted by the other users \cite{Scutari-Barbarossa-GT_PI}:
\begin{equation}
\mathcal{D}_{q}^{\min}\triangleq\left\{ k\in\{1,\ldots,N\}:\exists\text{ }\mathbf{p}_{-q}\in{\mathscr{P}}_{-q}\text{ such that }\left[{\mathsf{WF}}_{q}\left(\mathbf{p}_{-q}\right)\right]_{k}\neq0\right\} ,\label{D_q}\end{equation}

\noindent with ${\mathsf{WF}}_{q}\left(\mathbf{\cdot}\right)$ defined in (\ref{WF_mask})
and ${\mathscr{P}}_{-q}\triangleq{\mathscr{P}}_{1}\times\cdots\times{\mathscr{P}}_{q-1}\times{\mathscr{P}}_{q+1}\times\cdots\times{\mathscr{P}}_{Q}$.
In \cite{Scutari-Barbarossa-GT_PI}, an iterative procedure
to obtain a set $\mathcal{D}_{q}$ such that $\mathcal{D}_{q}^{\min}\subseteq\mathcal{D}_{q}\subseteq\{1,\cdots,N\}$ is given.
Let the matrix $\mathbf{S}^{\max}\in\mathbb{R}^{Q\times Q}_{+}$ be
defined as \begin{equation}
\left[\mathbf{S}^{\max}\right]_{qr}\triangleq\left\{ \begin{array}{ll}
\max\limits _{k\in\mathcal{D}_{q}\cap\mathcal{D}_{r}}\dfrac{|\bar{H}_{qr}(k)|^{2}}{|\bar{H}_{qq}(k)|^{2}}\dfrac{d_{qq}^{\gamma}}{d_{qr}^{\gamma}}\dfrac{P_{r}}{P_{q}}, & \text{if }\ r\neq q,\\
0, & \text{otherwise,}\end{array}\right.\label{Hmax}\end{equation}
 with the convention that the maximum in (\ref{Hmax}) is zero if
$\ \mathcal{D}_{q}\cap\mathcal{D}_{r}$ is empty. In (\ref{Hmax}),
each set $\mathcal{D}_{q}$ can be chosen as any subset of $\{1,\cdots,N\}$
such that $\mathcal{D}_{q}^{\min}\subseteq\mathcal{D}_{q}\subseteq\{1,\cdots,N\}$,
with $\mathcal{D}_{q}^{\min}$ defined in (\ref{D_q}). Using the
above notation, the asynchronous IWFA is described in Algorithm 1
(where $\mathrm{N_{it}}$ denotes the number of iterations).

\begin{algo}{Asynchronous Iterative Waterfilling Algorithm} SSet
$n=0$ and $\mathbf{p}_{q}^{(0)}=$ any feasible power allocation;
\newline\texttt{for} $n=0:\mathrm{N_{it}},$ \newline \begin{equation}
\,\,\,\,\mathbf{p}_{q}^{(n+1)}=\left\{ \begin{array}{ll}
\mathsf{WF}_{q}\left(\mathbf{p}_{1}^{(\tau_{1}^{q}(n))},\ldots,\mathbf{p}_{q-1}^{(\tau_{q-1}^{q}(n))},\mathbf{p}_{q+1}^{(\tau_{q+1}^{q}(n))},\ldots,\mathbf{p}_{Q}^{(\tau_{Q}^{q}(n))}\right), & \text{if }n\in\mathcal{T}_{q},\\
\mathbf{p}_{q}^{(n)}, & \text{otherwise};\end{array}\right.\hspace{1cm}\forall q\in\Omega\label{AIWFA}\end{equation}
 \newline \texttt{end} \end{algo} 

The convergence of the algorithm is guaranteed under the following
sufficient conditions.

\begin{theorem} \label{Theo-AIWFA} Assume that the following condition
is satisfied: \begin{equation}
\rho\left(\mathbf{S}^{\max}\right)<1,\tag{C1}\label{(C0)}\end{equation}
 where $\mathbf{S}^{\max}$ is defined in (\ref{Hmax}) and $\rho\left(\mathbf{S}^{\max}\right)$
denotes the spectral radius%
\footnote{The spectral radius $\rho\left(\mathbf{S}\right)$ of the matrix $\mathbf{S}$
is defined as $\rho\left(\mathbf{S}\right)=\max\left\{ \left\vert \lambda\right\vert :\lambda\right.$
$\left.\in\mathrm{eig}\left(\mathbf{S}\right)\right\} $, with $\mathrm{eig}\left(\mathbf{S}\right)$
denoting the set of eigenvalues of $\mathbf{S}$ \cite{Horn-book}.%
} of $\mathbf{S}^{\max}$. Then, as $\mathrm{N_{it}}\rightarrow\infty$,
the asynchronous IWFA described in Algorithm 1 converges to the unique
NE of game $\mathscr{G}$, for any set of feasible initial conditions
and updating schedule. \end{theorem}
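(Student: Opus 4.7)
The plan is to fit the asynchronous iteration into the general framework of Proposition 2.1 in Chapter 6 of \cite{Bertsekas Book-Parallel-Comp}, which asserts that if a mapping $\mathbf{T}$ has a fixed point $\mathbf{p}^{\star}$ and admits a sequence of nested Cartesian-product sets $X(k+1)\subseteq X(k)$ with $\bigcap_k X(k)=\{\mathbf{p}^{\star}\}$ and $\mathbf{T}(X(k))\subseteq X(k+1)$, then the totally asynchronous iteration converges to $\mathbf{p}^{\star}$ under A1--A3, independently of the schedule. I would construct such a family of sets from a block-contraction property of the waterfilling mapping $\mathsf{WF}(\mathbf{p})\triangleq(\mathsf{WF}_{1}(\mathbf{p}_{-1}),\ldots,\mathsf{WF}_{Q}(\mathbf{p}_{-Q}))$ with respect to a suitable weighted block-maximum norm.

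Concretely, the first technical step is to establish the per-user Lipschitz estimate
\begin{equation}
\bigl\Vert \mathsf{WF}_{q}(\mathbf{p}_{-q}^{(a)})-\mathsf{WF}_{q}(\mathbf{p}_{-q}^{(b)})\bigr\Vert_{\infty}\leq \sum_{r\neq q}\bigl[\mathbf{S}^{\max}\bigr]_{qr}\,\bigl\Vert \mathbf{p}_{r}^{(a)}-\mathbf{p}_{r}^{(b)}\bigr\Vert_{\infty},\qquad\forall q\in\Omega.\label{eq:perUser}
\end{equation}
The route is to recognize $\mathsf{WF}_{q}(\mathbf{p}_{-q})$ as the Euclidean projection onto the convex set ${\mathscr{P}}_{q}$ of the vector with entries $\mu_{q}-\mathsf{inn}_{q}(k)$, where $\mathsf{inn}_{q}(k)=(\sigma_{q}^{2}(k)+\sum_{r\neq q}|H_{qr}(k)|^{2}p_{r}(k))/|H_{qq}(k)|^{2}$ is the inverse normalized noise-plus-interference. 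Non-expansiveness of the projection, together with the observation that any carrier $k\notin\mathcal{D}_{q}$ never enters the waterfilling support and can therefore be dropped, reduces the problem to the carriers $k\in\mathcal{D}_{q}\cap\mathcal{D}_{r}$, where the ratio $|H_{qr}(k)|^{2}/|H_{qq}(k)|^{2}$ is dominated by $[\mathbf{S}^{\max}]_{qr}$; the auxiliary water-level $\mu_{q}$ is handled either by exploiting that it is itself a 1-Lipschitz function of the interference profile (it is chosen to satisfy a one-dimensional monotone equation) or, cleaner, by appealing to the variational characterization of the projection to bound the whole vector difference. This is the place where I expect the most bookkeeping: the coupling of the box constraints from the spectral masks \textbf{Co.2} with the simplex power constraint \textbf{Co.1} requires a careful argument to ensure that the water-level adjustment does not amplify the bound beyond the factor $[\mathbf{S}^{\max}]_{qr}$.

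Once \eqref{eq:perUser} is in hand, the second step is to turn the componentwise inequality into a block-contraction. Since $\mathbf{S}^{\max}\geq 0$ and condition \eqref{(C0)} says $\rho(\mathbf{S}^{\max})<1$, by the Perron--Frobenius theorem there exists a strictly positive vector $\mathbf{w}=(w_{1},\ldots,w_{Q})>0$ and a scalar $\alpha\in(\rho(\mathbf{S}^{\max}),1)$ with $\mathbf{S}^{\max}\mathbf{w}\leq\alpha\,\mathbf{w}$. Define the weighted block-maximum norm $\Vert \mathbf{p}\Vert_{\mathbf{w}}^{\mathrm{block}}\triangleq \max_{q}\,\Vert \mathbf{p}_{q}\Vert_{\infty}/w_{q}$. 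Dividing \eqref{eq:perUser} by $w_{q}$ and taking the maximum over $q$ gives $\Vert \mathsf{WF}(\mathbf{p}^{(a)})-\mathsf{WF}(\mathbf{p}^{(b)})\Vert_{\mathbf{w}}^{\mathrm{block}}\leq \alpha\,\Vert \mathbf{p}^{(a)}-\mathbf{p}^{(b)}\Vert_{\mathbf{w}}^{\mathrm{block}}$.

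Finally, the block-contraction property has two immediate consequences. By the Banach fixed-point theorem, $\mathsf{WF}$ has exactly one fixed point in the compact product set ${\mathscr{P}}_{1}\times\cdots\times{\mathscr{P}}_{Q}$, which by \eqref{sym_WF-sistem_mask} is the unique NE of ${\mathscr{G}}$ (existence is already guaranteed by Proposition~\ref{Existence}). Moreover, defining the nested Cartesian-product sets $X(k)\triangleq\prod_{q}\{\mathbf{p}_{q}\in{\mathscr{P}}_{q}:\Vert \mathbf{p}_{q}-\mathbf{p}_{q}^{\star}\Vert_{\infty}\leq \alpha^{k}w_{q}D\}$, with $D$ chosen so that $X(0)\supseteq\prod_{q}{\mathscr{P}}_{q}$, the contraction gives $\mathsf{WF}(X(k))\subseteq X(k+1)$ and $\bigcap_{k}X(k)=\{\mathbf{p}^{\star}\}$. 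Proposition 2.1 of \cite{Bertsekas Book-Parallel-Comp} then yields convergence of the asynchronous IWFA to $\mathbf{p}^{\star}$ from any feasible initialization and for any update schedule satisfying A1--A3, completing the proof.
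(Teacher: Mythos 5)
Your overall architecture is the same as the paper's: view $\mathsf{WF}_{q}$ as a Euclidean projection, restrict to the carriers in $\mathcal{D}_{q}$ (the paper does this via the sets ${\mathscr{P}}_{q}^{\text{eff}}$), derive a block-contraction of the multiuser mapping with respect to a weighted block-maximum norm, convert condition (\ref{(C0)}) into $\Vert\mathbf{S}^{\max}\Vert_{\infty}^{\mathbf{w}}<1$ for some $\mathbf{w}>\mathbf{0}$ (you invoke Perron--Frobenius directly, the paper cites the equivalent Corollary 6.1 of \cite{Bertsekas Book-Parallel-Comp}), build nested shrinking boxes around the fixed point, and close with the Asynchronous Convergence Theorem. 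The uniqueness argument via the contraction and the verification of the nesting, synchronous-convergence, and box conditions are all as in the paper.

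There is, however, one genuine gap in your key lemma: you state the per-user Lipschitz estimate with the $\ell_{\infty}$ norm on each block, but justify it by non-expansiveness of the Euclidean projection onto ${\mathscr{P}}_{q}$, which holds only in the \emph{Euclidean} norm. The projection onto a simplex-with-box set is not $1$-Lipschitz in $\ell_{\infty}$: on an active set $S$ its Jacobian has rows with absolute sum $2-2/\vert S\vert$, which exceeds $1$ as soon as $\vert S\vert\geq3$ (e.g., projecting $(0,0,0)$ and $(-\delta,\delta,\delta)$ onto $\{x\geq0,\ \sum_{k}x_{k}=3\}$ gives outputs differing by $4\delta/3$ in $\ell_{\infty}$). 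Your alternative route via the water level being $1$-Lipschitz runs into the same problem, since the output perturbation is $\Delta\mu_{q}-\Delta\mathsf{insr}_{q}(k)$ and the two contributions can add. The paper sidesteps this entirely by using the Euclidean norm \emph{inside} each block, i.e., the norm $\Vert\mathbf{p}\Vert_{2,\text{block}}^{\mathbf{w}}=\max_{q}\Vert\mathbf{p}_{q}\Vert_{2}/w_{q}$ of (\ref{block_max_weight_norm}); then projection non-expansiveness applies verbatim, the interference term is bounded by $\sum_{r\neq q}[\mathbf{S}^{\max}]_{qr}\Vert\mathbf{p}_{r}^{(a)}-\mathbf{p}_{r}^{(b)}\Vert_{2}$, and one obtains exactly Theorem \ref{Theorem_contraction}. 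The repair to your argument is mechanical --- replace $\ell_{\infty}$ by $\ell_{2}$ within each block throughout --- and nothing downstream changes, because the box condition of the Asynchronous Convergence Theorem only requires the sets to factor across users, not across carriers.
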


\begin{proof} The proof consists in showing that, under (\ref{(C0)}),
conditions of the Asynchronous Convergence Theorem in \cite{Bertsekas Book-Parallel-Comp}
are satisfied. A key point in the proof is given by the following
property of the multiuser waterfilling mapping $\mathsf{WF}\left(\mathbf{p}\right)=\left (\mathsf{WF}_{q}\left(\mathbf{p}_{-q}\right)\right) _{q\in\Omega}$
based on the interpretation of the waterfilling solution (\ref{WF_mask})
as a proper projector \cite{Scutari-Barbarossa-GT_PII}: \begin{equation}
\left\Vert \mathbf{\mathsf{WF}(p}^{(1)}\mathbf{)}-\mathbf{\mathsf{WF}(p}^{(2)}\mathbf{)}\right\Vert \leq\beta\left\Vert \mathbf{p}^{(1)}-\mathbf{p}^{(2)}\right\Vert ,\quad\forall\mathbf{p}^{(1)},\text{ }\mathbf{p}^{(2)}\in{\mathscr{P}}\text{ }\mathbf{,}\label{contraction_WF}\end{equation}
 where $\left\Vert \cdot\right\Vert $ is a proper vector norm and
$\beta$ is a positive constant,  which is less than $1$ if condition (\ref{(C0)}) is satisfied. See Appendix \ref{Proof_Theorems_1_and_2}
for the details. \end{proof}

\bigskip{}

To give additional insight into the physical interpretation of the
convergence conditions of Algorithm 1, we provide the following corollary
of Theorem \ref{Theo-AIWFA}.

\begin{corollary} \label{Corollary-C1-C2}A sufficient condition
for (\ref{(C0)}) in Theorem \ref{Theo-AIWFA} is given by one of the two following set of conditions:\begin{equation}
\hspace{-0.1cm}\dfrac{1}{w_{q}}\text{ }\!\!{\displaystyle \sum\limits _{r\neq q}}\max\limits _{k\in\mathcal{D}_{r}\cap\mathcal{D}_{q}}\left\{ \dfrac{|\bar{H}_{qr}(k)|^{2}}{|\bar{H}_{qq}(k)|^{2}}\right\} \dfrac{d_{qq}^{\gamma}}{d_{qr}^{\gamma}}\dfrac{P_{r}}{P_{q}}w_{r}<1,\quad\quad \forall q\in\Omega,\hspace{-0.15cm}\vspace{-0.2cm}\tag{C2}\label{cond_C1}\end{equation}
\begin{equation}
\hspace{-0.1cm}\dfrac{1}{w_{r}}\!\!\text{ }{\displaystyle \sum\limits _{q\neq r}}\max\limits _{k\in\mathcal{D}_{r}\cap\mathcal{D}_{q}}\left\{ \dfrac{|\bar{H}_{qr}(k)|^{2}}{|\bar{H}_{qq}(k)|^{2}}\right\} \dfrac{d_{qq}^{\gamma}}{d_{qr}^{\gamma}}\dfrac{P_{r}}{P_{q}}w_{q}<1,\quad\quad \,\forall r\in\Omega,\hspace{-0.15cm}\tag{C3}\label{cond_C2}\end{equation}
 where $\mathbf{w}\triangleq\lbrack w_{1},\ldots,w_{Q}]^{T}$ is any
positive vector.%
\footnote{The optimal positive vector $\mathbf{w}$ in (\ref{cond_C1})-(\ref{cond_C2})
is given by the solution of a geometric programming, as shown in \cite[Corollary 5]{Scutari-Barbarossa-GT_PII}%
} \end{corollary}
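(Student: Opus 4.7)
The plan is to recognize that (\ref{cond_C1}) and (\ref{cond_C2}) are nothing but weighted maximum row-sum and column-sum bounds on the nonnegative matrix $\mathbf{S}^{\max}$, and then invoke the standard fact that the spectral radius of any matrix is dominated by every induced matrix norm.

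Concretely, for an arbitrary positive vector $\mathbf{w}=[w_1,\ldots,w_Q]^T>0$, introduce the weighted $\ell_\infty$-norm on $\mathbb{R}^Q$, $\|\mathbf{x}\|_{\infty,\mathbf{w}} \triangleq \max_q |x_q|/w_q$, and the associated induced matrix norm
\[
\|\mathbf{A}\|_{\infty,\mathbf{w}} = \max_{q}\, \frac{1}{w_q}\sum_{r=1}^{Q} |[\mathbf{A}]_{qr}|\, w_r.
\]
First I would observe that condition (\ref{cond_C1}), recalling the definition of $\mathbf{S}^{\max}$ in (\ref{Hmax}) and the fact that $[\mathbf{S}^{\max}]_{qq}=0$, is precisely the statement $\|\mathbf{S}^{\max}\|_{\infty,\mathbf{w}}<1$. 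Since for every induced matrix norm and every square matrix $\mathbf{A}$ one has $\rho(\mathbf{A})\leq \|\mathbf{A}\|$, it follows immediately that $\rho(\mathbf{S}^{\max})<1$, which is (\ref{(C0)}).

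For the symmetric case (\ref{cond_C2}), I would use the dual weighted $\ell_1$-norm $\|\mathbf{x}\|_{1,\mathbf{w}}\triangleq \sum_q |x_q|\,w_q$, whose induced matrix norm is the weighted maximum column sum,
\[
\|\mathbf{A}\|_{1,\mathbf{w}} = \max_{r}\, \frac{1}{w_r}\sum_{q=1}^{Q} |[\mathbf{A}]_{qr}|\, w_q.
\]
Then (\ref{cond_C2}) is exactly $\|\mathbf{S}^{\max}\|_{1,\mathbf{w}}<1$, and again $\rho(\mathbf{S}^{\max})\leq \|\mathbf{S}^{\max}\|_{1,\mathbf{w}}<1$. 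Alternatively, one could reduce (\ref{cond_C2}) to the previous case by noting that $\rho(\mathbf{S}^{\max})=\rho((\mathbf{S}^{\max})^T)$ and that (\ref{cond_C2}) is exactly (\ref{cond_C1}) applied to $(\mathbf{S}^{\max})^T$ with the same weight vector $\mathbf{w}$.

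There is essentially no technical obstacle here: the whole argument rests on the elementary fact $\rho(\mathbf{A})\leq\|\mathbf{A}\|$ for induced matrix norms, applied to two standard weighted norms, together with the nonnegativity of $\mathbf{S}^{\max}$ (so absolute values can be dropped). The only minor bookkeeping step is to verify that the expressions on the left-hand sides of (\ref{cond_C1}) and (\ref{cond_C2}) match the entries $[\mathbf{S}^{\max}]_{qr}$ in (\ref{Hmax}) weighted by $w_r/w_q$ and $w_q/w_r$, respectively, which is immediate by inspection.
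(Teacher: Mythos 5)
Your proof is correct and follows essentially the same route as the paper: the paper also deduces (\ref{cond_C1}) from $\rho(\mathbf{S}^{\max})\leq\Vert\mathbf{S}^{\max}\Vert_{\infty}^{\mathbf{w}}$ (the weighted maximum row-sum norm bound on the spectral radius) and obtains (\ref{cond_C2}) via $\rho(\mathbf{S}^{\max})=\rho(\mathbf{S}^{\max T})$, exactly as in your alternative argument for the column-sum condition.
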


\noindent Note that, according to the definition of $\mathcal{D}_{q}$
in (\ref{Hmax}), one can always choose $\mathcal{D}_{q}=\{1,\ldots,N\}$
in (\ref{(C0)}) and (\ref{cond_C1})-(\ref{cond_C2}). However, less
stringent conditions are obtained by removing unnecessary carriers,
i.e., the carriers that, for the given power budget and interference
levels, are never going to be used. 

Recall that, if finite order constellations
are used, Theorem \ref{Theo-AIWFA} is still valid using the gap-approximation
method \cite{Forney,Goldsmith-Chua} as pointed out in Section \ref{Sec:Problem-Formulation}.
It is sufficient to replace each $\left\vert H_{qq}(k)\right\vert ^{2}$
in (\ref{(C0)}) with $\left\vert H_{qq}(k)\right\vert ^{2}/\Gamma_{q}.$

\medskip{}

\noindent \addtocounter{rem}{1}\textbf{Remark {\therem\ - Global
convergence and robustness of the algorithm}}: Even though the rate
maximization game in (\ref{Rate Game}) and the consequent waterfilling
mapping (\ref{WF_mask}) are nonlinear, condition (\ref{(C0)}) guarantees
the \emph{global} convergence of the asynchronous IWFA. Observe that
Algorithm 1 contains as special cases a plethora of algorithms, each
one obtained by a possible choice of the scheduling of the users in
the updating procedure (i.e., the parameters $\{\tau_{r}^{q}(n)\}$
and $\{\mathcal{T}_{q}\}$). The important result stated in Theorem
\ref{Theo-AIWFA} is that all the algorithms resulting as special
cases of the asynchronous IWFA are guaranteed to reach the unique
NE of the game, under the same set of convergence conditions (provided
that A1$-$A3 are satisfied), since condition (\ref{(C0)}) does
not depend on the particular choice of $\{\mathcal{T}_{q}\}$ and
$\{\tau_{r}^{q}(n)\}.$

\medskip{}

\addtocounter{rem}{1}
\noindent \textbf{Remark {\therem} - Physical
interpretation of convergence conditions: \ }As expected, the convergence
of the asynchronous IWFA and the uniqueness of NE are ensured if the
interferers are sufficiently far apart from the destinations. In fact,
from (\ref{cond_C1})-(\ref{cond_C2}) one infers that, for any given
set of channel realizations and power constraints, there exists a
distance beyond which the convergence of the asynchronous IWFA (and
the uniqueness of NE) is guaranteed, corresponding to the maximum
level of interference that may be tolerated by each receiver [as quantified, e.g., in (\ref{cond_C1})] or that may be generated by each transmitter [as quantified, e.g., in (\ref{cond_C2})]. But
the most interesting result coming from (\ref{(C0)}) and (\ref{cond_C1})-(\ref{cond_C2})
is that the convergence of the asynchronous IWFA is robust against
the worst normalized channels $|H_{qr}(k)|^{2}/$ $|H_{qq}(k)|^{2};$
in fact, the subchannels corresponding to the highest ratios $|H_{qr}(k)|^{2}/|H_{qq}(k)|^{2}$
(and, in particular, the subchannels where $|H_{qq}(k)|^{2}$ is vanishing)
do not necessarily affect the convergence of the algorithm, as their
carrier indices may not belong to the set $\mathcal{D}_{q}$.

\medskip{}

\addtocounter{rem}{1}
\noindent \textbf{Remark {\therem\ - Distributed
nature of the algorithm: }}Since the asynchronous IWFA is based on
the waterfilling solution (\ref{WF_mask}), it can be implemented
in a distributed way, where each user, to maximize his own rate, only
needs to measure the PSD of the overall interference-plus-noise and
waterfill over it. More interestingly, according to the asynchronous
scheme, the users may update their strategies using a potentially
outdated version of the interference. Furthermore, some users are
even allowed to update their power allocation more often than others,
without affecting the convergence of the algorithm. These features
strongly relax the constraints on the synchronization of the users'
updates with respect to those imposed, for example, by the simultaneous
or sequential updating schemes.

\noindent We can generalize the asynchronous IWFA given in Algorithm
1 by introducing a memory in the updating process, as given in Algorithm
2. We call this new algorithm \emph{smoothed} asynchronous IWFA.\bigskip{}

\begin{algo}{Smoothed Asynchronous Iterative Waterfilling Algorithm}
SSet $n=0$ and $\mathbf{p}_{q}^{(0)}=$ any feasible power allocation
and $\alpha_{q}\in[0,1),\,\forall q\in\Omega$; \newline\texttt{for}
$n=0:\mathrm{N_{it}},$ \newline \begin{equation}
\,\,\,\,\mathbf{p}_{q}^{(n+1)}=\left\{ \begin{array}{ll}
\alpha_{q}\mathbf{p}_{q}^{(n)}+(1-\alpha_{q})\mathsf{WF}_{q}\left(\mathbf{p}_{-q}^{(\mathbf{\tau}^{q}(n))}\right), & \text{if }n\in T_{q},\\
\mathbf{p}_{q}^{(n)}, & \text{otherwise},\end{array}\right.\hspace{1cm}\forall q\in\Omega;\label{AIWFA-memory}\end{equation}
 \newline \texttt{end} \end{algo}\bigskip{}

Each factor $\alpha_{q}\in\lbrack0,1)$ in Algorithm 2 can be interpreted
as a forgetting factor: The larger $\alpha_{q}$ is, the longer 
the memory of the algorithm is.%
\footnote{In this paper, we are only considering constant channels. Nevertheless,
in a time-varying scenario, (\ref{AIWFA-memory}) could be used to
smooth the fluctuations due to the channel variations. In such a case,
if the channel is fixed or highly stationary, it is convenient to
take $\alpha_{q}$ close to $1$; conversely, if the channel is rapidly
varying, it is better to take a small $\alpha_{q}$.%
} Interestingly the choice of $\alpha_{q}$'s does not affect the convergence
capability of the algorithm (although it may affect the speed of convergence \cite{Scutari-Barbarossa-GT_PII}), as proved in the following.

\begin{theorem} \label{Theo-AIWFA_memory} Assume that condition
of Theorem \ref{Theo-AIWFA} is satisfied. Then, as $\mathrm{N_{it}}\rightarrow\infty$,
the smoothed asynchronous IWFA described in Algorithm 2 converges to the unique
NE of game $\mathscr{G}$, for any set of feasible initial conditions,
updating schedule, and $\{\alpha_{q}\}_{q\in\Omega},$ with $\alpha_{q}\in\left[0,1\right),$
$\forall q\in\Omega$. \end{theorem}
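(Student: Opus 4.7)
The plan is to mimic the proof of Theorem \ref{Theo-AIWFA} by showing that the smoothed update defines a block-contraction mapping whose fixed points coincide with the Nash equilibria of $\mathscr{G}$, and then invoke the Asynchronous Convergence Theorem of \cite{Bertsekas Book-Parallel-Comp}. Define the mapping $\mathbf{T}:\mathscr{P}\to\mathscr{P}$ componentwise by
\begin{equation}
T_q(\mathbf{p}) \triangleq \alpha_q\,\mathbf{p}_q + (1-\alpha_q)\,\mathsf{WF}_q(\mathbf{p}_{-q}),\qquad q\in\Omega.
\end{equation}
Since $\alpha_q\in[0,1)$ implies $1-\alpha_q>0$, the equation $\mathbf{p}_q^\star=T_q(\mathbf{p}^\star)$ is equivalent to $\mathbf{p}_q^\star=\mathsf{WF}_q(\mathbf{p}_{-q}^\star)$, so the fixed points of $\mathbf{T}$ are exactly the NE of $\mathscr{G}$ characterized in (\ref{sym_WF-sistem_mask}). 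Moreover, since $\mathscr{P}_q$ is convex and $T_q$ is a convex combination of elements of $\mathscr{P}_q$, the iterates stay in $\mathscr{P}$.

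Next I would establish a block-contraction property for $\mathbf{T}$ with respect to the weighted block-maximum norm used in the proof of Theorem \ref{Theo-AIWFA}. From the cited projection-based analysis of $\mathsf{WF}$ \cite{Scutari-Barbarossa-GT_PII} one has, for an appropriate block norm $\|\cdot\|_q$, the componentwise bound
\begin{equation}
\bigl\|\mathsf{WF}_q(\mathbf{p}_{-q}^{(1)})-\mathsf{WF}_q(\mathbf{p}_{-q}^{(2)})\bigr\|_q \le \sum_{r\neq q}\bigl[\mathbf{S}^{\max}\bigr]_{qr}\bigl\|\mathbf{p}_r^{(1)}-\mathbf{p}_r^{(2)}\bigr\|_r.
\end{equation}
Applying the triangle inequality to the definition of $T_q$ then yields
\begin{equation}
\bigl\|T_q(\mathbf{p}^{(1)})-T_q(\mathbf{p}^{(2)})\bigr\|_q \le \alpha_q\bigl\|\mathbf{p}_q^{(1)}-\mathbf{p}_q^{(2)}\bigr\|_q + (1-\alpha_q)\sum_{r\neq q}\bigl[\mathbf{S}^{\max}\bigr]_{qr}\bigl\|\mathbf{p}_r^{(1)}-\mathbf{p}_r^{(2)}\bigr\|_r,
\end{equation}
which, in matrix form, corresponds to the nonnegative matrix $\widetilde{\mathbf{S}} \triangleq \mathbf{D}_\alpha + (\mathbf{I}-\mathbf{D}_\alpha)\mathbf{S}^{\max}$, where $\mathbf{D}_\alpha = \operatorname{diag}(\alpha_1,\ldots,\alpha_Q)$.

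The crux is then to show $\rho(\widetilde{\mathbf{S}})<1$ whenever (\ref{(C0)}) holds. Since $\mathbf{S}^{\max}\ge 0$ and $\rho(\mathbf{S}^{\max})<1$, Perron--Frobenius theory provides a strictly positive vector $\mathbf{w}>0$ with $\mathbf{S}^{\max}\mathbf{w}<\mathbf{w}$ (componentwise). Using $\alpha_q\in[0,1)$ one obtains
\begin{equation}
\widetilde{\mathbf{S}}\,\mathbf{w} = \mathbf{D}_\alpha \mathbf{w} + (\mathbf{I}-\mathbf{D}_\alpha)\mathbf{S}^{\max}\mathbf{w} < \mathbf{D}_\alpha \mathbf{w} + (\mathbf{I}-\mathbf{D}_\alpha)\mathbf{w} = \mathbf{w},
\end{equation}
so $\rho(\widetilde{\mathbf{S}})<1$ by the standard nonnegative-matrix characterization of the spectral radius. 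Consequently $\mathbf{T}$ is a block-contraction in the weighted block-maximum norm with weights $\mathbf{w}$, and Proposition 2.1 of Chapter 6 in \cite{Bertsekas Book-Parallel-Comp} (the Asynchronous Convergence Theorem) applies verbatim to Algorithm 2, yielding convergence to the unique NE under A1--A3, for any feasible initialization, any updating schedule, and any $\{\alpha_q\}\subset[0,1)$.

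The main obstacle in this plan is establishing the block-contraction inequality for $\widetilde{\mathbf{S}}$ uniformly in the forgetting factors. The worry is that as some $\alpha_q\to 1$ the contraction modulus of $\mathbf{T}$ approaches $1$; the Perron--Frobenius argument above is what allows one to absorb an arbitrary $\mathbf{D}_\alpha\prec \mathbf{I}$ without degrading the existence of a strict inequality $\widetilde{\mathbf{S}}\mathbf{w}<\mathbf{w}$. Once this spectral inequality is secured, the rest of the argument parallels the proof of Theorem \ref{Theo-AIWFA} without modification.
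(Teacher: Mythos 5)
Your proposal is correct and follows essentially the same route as the paper's own proof: the paper also defines the smoothed mapping $\mathbf{T}_{q}(\mathbf{p})=\alpha_{q}\mathbf{p}_{q}+(1-\alpha_{q})\mathsf{WF}_{q}(\mathbf{p}_{-q})$, shows it is a block-contraction (in the weighted block-maximum norm) with contraction matrix $\mathbf{S}_{\mathbf{\alpha}}^{\max}=\mathbf{D}_{\mathbf{\alpha}}+(\mathbf{I}-\mathbf{D}_{\mathbf{\alpha}})\mathbf{S}^{\max}$, reduces $\Vert\mathbf{S}_{\mathbf{\alpha}}^{\max}\Vert_{\infty}^{\mathbf{w}}<1$ to $\rho(\mathbf{S}^{\max})<1$ via the existence of a positive weight vector (your Perron--Frobenius step is exactly the paper's citation of \cite{Bertsekas Book-Parallel-Comp}), and then invokes the Asynchronous Convergence Theorem. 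The only small imprecision is that you state the waterfilling Lipschitz bound with the constants $[\mathbf{S}^{\max}]_{qr}$ (maxima over $\mathcal{D}_{q}\cap\mathcal{D}_{r}$) on all of ${\mathscr{P}}$, whereas that bound is valid on the restricted set ${\mathscr{P}}^{\text{eff}}$ of powers supported on the $\mathcal{D}_{q}$'s, which is why the paper carries out the whole argument on ${\mathscr{P}}^{\text{eff}}$ and treats the initial point separately.
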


\begin{proof} See Appendix \ref{Proof_Theorems_1_and_2}. \end{proof}

\medskip{}

\addtocounter{rem}{1}
\noindent \textbf{Remark {\therem\ - Asynchronous IWFA
in the presence of intercarrier interference:}} The proposed AIWFA
can be extended to the case where the transmission by the different
users contains time and frequency synchronization offsets. In \cite{Scutari-ITA07,Scutari-ICASSP07}
we showed that the Asynchronous IWFA is robust against the intercarrier
interference due to time and/or frequency offsets among the links
and we derived sufficient conditions guaranteeing its convergence
in the presence of such time/frequency misalignments.
\vspace{-0.5cm}

\section{Two Special Cases: Sequential and Simultaneous IWFAs}

\label{Sec:two examples} In this section, we specialize our asynchronous
IWFA to two special cases: the \emph{sequential} and the \emph{simultaneous}
IWFAs. As a by-product of the proposed unified framework, we show
that both algorithms converge to the NE under the same sufficient
conditions, that are larger than  conditions obtained for the convergence of the sequential IWFA in 
\cite{Yu,ChungISIT03}, \cite{Yamashitay-Luo}, \cite{Tse} (without considering the spectral mask constraints) and \cite{Luo-Pang}
(including the spectral mask constraints).

\medskip{}

\noindent \textbf{Sequential Iterative Waterfilling:} The sequential
IWFA is an instance of the Gauss-Seidel scheme by which each user
is sequentially updated \cite{Bertsekas Book-Parallel-Comp} based
on the waterfilling mapping (\ref{WF_mask}). In fact, in sequential
IWFA each player, sequentially and according to a fixed order, maximizes
his own rate (\ref{Rate}), performing the single-user waterfilling
solution in (\ref{WF_mask}), given the others as interference. This
scheme can also be seen as a particular case of the general asynchronous
IWFA with the following parameters: $\mathcal{T}_{q}=\left\{ kQ+q,\quad k\in\mathbb{N}_{+}\right\} =\left\{ q,Q+q,2Q+q,\ldots\right\} $
and $\tau_{r}^{q}(n)=n,$ $\forall r,q.$ Using this settings in Algorithm
1, the sequential IWFA can be written in compact form as in Algorithm
\ref{IWFA_Algo}.\bigskip{}

\begin{algo}{Sequential Iterative Waterfilling Algorithm}SSet $n=0$
and $\mathbf{p}_{q}^{(0)}=$ any feasible power allocation; \newline\texttt{for}
$n=0:\mathrm{N_{it}},$ \newline \begin{equation}
\,\,\,\,\mathbf{p}_{q}^{(n+1)}=\left\{ \begin{array}{ll}
\mathsf{WF}_{q}\left(\mathbf{p}_{-q}^{(n)}\right), & \text{if }(n+1)\,\text{mod}\, Q=q,\\
\mathbf{p}_{q}^{(n)}, & \text{otherwise},\end{array}\right.\hspace{1cm}\forall q\in\Omega;\end{equation}
 \newline \texttt{end}\label{IWFA_Algo} \end{algo}

\bigskip{}

\noindent \textbf{Simultaneous Iterative Waterfilling:} The simultaneous
IWFA can be interpreted as the synchronous Jacobi instance of the
asynchronous IWFA. In fact, in the simultaneous IWFA, all the users
update their own covariance matrix \emph{simultaneously} at each iteration,
performing the single user waterfilling solution (\ref{WF_mask}),
given the interference generated by the other users in the \emph{previous}
iteration. This is a particular case of the asynchronous IWFA in Algorithm
1 with the following parameters: $\mathcal{T}_{q}=\mathbb{N}_{+},$
\ and \ $\tau_{r}^{q}(n)=n,$ $\forall r,q,$ which leads to Algorithm
\ref{Algo_SIWFA}.\medskip{}

\begin{algo}{Simultaneous Iterative Waterfilling Algorithm} SSet
$n=0$ and $\mathbf{p}_{q}^{(0)}=$ any feasible power allocation;
\newline\texttt{for} $n=0:\mathrm{N_{it}},$ \newline\vspace{-0.2cm}
 \begin{equation}
\mathbf{p}_{q}^{(n+1)}={\mathsf{{WF}}}_{q}\left(\mathbf{p}_{-q}^{(n)}\right),\hspace{1cm}\forall q\in\Omega;\label{SIWFA}\end{equation}
 \texttt{end}\label{Algo_SIWFA} \end{algo} 

\medskip{}

By direct product of our unified framework, invoking Theorem \ref{Theo-AIWFA}
we obtain the following unified set of convergence conditions for
both sequential and simultaneous IWFAs \cite{Scutari-Barbarossa-GT_PII}.
\vspace{-0.1cm}

\begin{theorem} \label{Theo-Sim_seq_IWFA} Assume that condition
(\ref{(C0)}) of Theorem \ref{Theo-AIWFA} is satisfied. Then, as
$\mathrm{N_{it}}\rightarrow\infty$, the sequential and simultaneous IWFAs, described
in Algorithm \ref{IWFA_Algo} and \ref{Algo_SIWFA}, respectively,
converge geometrically to the unique NE of game ${\mathscr{G}}$ \ for
any set of feasible initial conditions and updating schedule. \end{theorem}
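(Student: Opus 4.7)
The plan is to apply Theorem~\ref{Theo-AIWFA} together with the contraction property~(\ref{contraction_WF}) that underlies its proof. Both the sequential and simultaneous IWFAs have just been identified as instances of the asynchronous IWFA (Algorithm~1), via explicit choices of the scheduling sets $\{\mathcal{T}_q\}$ and of the delay parameters $\{\tau_r^q(n)\}$ that trivially satisfy assumptions A1--A3. Consequently, under (\ref{(C0)}), Theorem~\ref{Theo-AIWFA} already delivers convergence of both algorithms to the unique NE of $\mathscr{G}$; only the \emph{geometric} rate needs additional argument.

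For the simultaneous IWFA this is immediate. The iteration reads $\mathbf{p}^{(n+1)} = \mathsf{WF}(\mathbf{p}^{(n)})$, and $\mathbf{p}^{\star}$ is a fixed point of $\mathsf{WF}$, so taking $\mathbf{p}^{(1)} = \mathbf{p}^{(n)}$ and $\mathbf{p}^{(2)} = \mathbf{p}^{\star}$ in (\ref{contraction_WF}) yields
\[
\|\mathbf{p}^{(n+1)} - \mathbf{p}^{\star}\| \,\leq\, \beta\, \|\mathbf{p}^{(n)} - \mathbf{p}^{\star}\|,
\]
with $\beta < 1$ under (\ref{(C0)}). A simple induction on $n$ then gives geometric convergence at per-iteration rate $\beta$.

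For the sequential IWFA the argument is more delicate because only one block is updated per iteration. The route I would take is to exploit the explicit form of the norm used to establish (\ref{contraction_WF}): as indicated in Appendix~\ref{Proof_Theorems_1_and_2} (and consistent with the weighted structure of conditions (\ref{cond_C1})--(\ref{cond_C2})), this norm is of weighted block-maximum type, $\|\mathbf{p}\| = \max_{q} \|\mathbf{p}_q\|_q / w_q$. Under such a norm, a standard block-Gauss--Seidel contraction argument \cite{Bertsekas Book-Parallel-Comp} applies: at the iteration at which user $q$ updates, (\ref{contraction_WF}) specialized to the $q$-th block forces $\|\mathbf{p}_q^{(n+1)} - \mathbf{p}_q^{\star}\|_q / w_q \leq \beta\, \|\mathbf{p}^{(n)} - \mathbf{p}^{\star}\|$, while all other blocks remain unchanged and their normalized errors therefore do not increase. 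Propagating this estimate through one full sweep of $Q$ users starting from some time $n_0$, every block ends the sweep with normalized error bounded by $\beta\, \|\mathbf{p}^{(n_0)} - \mathbf{p}^{\star}\|$, so that $\|\mathbf{p}^{(n_0+Q)} - \mathbf{p}^{\star}\| \leq \beta\, \|\mathbf{p}^{(n_0)} - \mathbf{p}^{\star}\|$, which is geometric convergence at per-iteration rate $\beta^{1/Q}$.

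The main obstacle is the verification that the norm in (\ref{contraction_WF}) is indeed of block-maximum type, since the Gauss--Seidel contraction argument above hinges on this feature (it fails for, e.g., a block-Euclidean norm). Should the appendix produce the contraction in a different norm, a fallback would be to run a direct spectral-radius analysis of the error-propagation operator of the sequential scheme, using the fact that this operator is dominated entry-wise by the same nonnegative matrix $\mathbf{S}^{\max}$ that governs (\ref{(C0)}), and is therefore a contraction under the same condition $\rho(\mathbf{S}^{\max}) < 1$; this again yields a geometric rate, although with a possibly looser constant.
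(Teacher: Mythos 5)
Your proposal is correct and takes essentially the same route as the paper: it identifies both schemes as special cases of the asynchronous IWFA satisfying A1--A3, invokes Theorem \ref{Theo-AIWFA} for convergence to the unique NE under (\ref{(C0)}), and obtains the geometric rate from the contraction of the waterfilling mapping in the weighted block-maximum norm, which is exactly the machinery of Theorem \ref{Theorem_contraction} in Appendix \ref{Proof_Theorems_1_and_2} (and of \cite{Scutari-Barbarossa-GT_PII}). Your explicit Gauss--Seidel per-sweep estimate for the sequential case is consistent with that norm and merely spells out details the paper leaves to the unified framework.
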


\medskip{}

\addtocounter{rem}{1}
\noindent \textbf{Remark {\therem\ - Algorithm robustness:
\ }}It follows form Theorem\textbf{\ }\ref{Theo-Sim_seq_IWFA}
that slight variations of the sequential or simultaneous IWFAs that
fall within the unified framework of the asynchronous IWFA, are still
guaranteed to converge, under the condition in Theorem \ref{Theo-AIWFA}.
This means that, using for example Algorithm \ref{IWFA_Algo}, if
the order in the users' updates changes during time, or some user
skips some update, or he uses an outdated version of the interference
PSD, this does not affect the convergence of the algorithm. What is
affected is only the convergence time. Moreover, as for the smoothed
asynchronous IWFA, also for the sequential and simultaneous IWFAs
described in Algorithm \ref{IWFA_Algo} and \ref{Algo_SIWFA}, respectively,
one can introduce a memory in the updating process \cite{Scutari-Barbarossa-GT_PII},
still guaranteeing convergence under conditions of Theorem\textbf{\
}\ref{Theo-Sim_seq_IWFA}.

\medskip{}

\addtocounter{rem}{1}
\noindent \textbf{Remark {\therem\ - }Comparison with
previous convergence conditions: \ }Algorithm \ref{IWFA_Algo} generalizes
the well-known sequential iterative waterfilling algorithm proposed
by Yu et al. in \cite{Yu} to the case where the spectral mask constraints
are explicitly taken into account. In fact, the algorithm in \cite{Yu}
can be obtained as a special case of Algorithm \ref{IWFA_Algo}, by
removing the spectral mask constraints in each set ${\mathscr{P}}_{q}$
in (\ref{admissible strategy set}), (i.e. setting $p_{q}^{\max}(k)=+\infty,$
$\forall k,q$), so that the waterfilling operator in (\ref{WF_mask})
becomes the classical waterfilling solution \cite{Cover}, i.e., $\begin{array}{c}
\mathsf{WF}_{q}\left(\mathbf{p}_{-q}\right)=\left(\mu_{q}\mathbf{1}_{N}-\boldsymbol{\mathsf{insr}}_{q}\right)^{+}\end{array},$ where $\left(x\right)^{+}=\max(0,x)$ and $\boldsymbol{\mathsf{insr}}_{q}\triangleq\lbrack\mathsf{insr}_{q}(1),\ldots,\mathsf{insr}_{q}(N)]^{T},$
with $\mathsf{insr}_{q}(k)=(\sigma_{w_{q}}^{2}(k)+\sum_{\, r\neq q}\left\vert H_{qr}(k)\right\vert ^{2}p_{r}(k))/\left\vert H_{qq}(k)\right\vert ^{2}$.
The convergence of the sequential IWFA has been studied in a number
of works, either in the absence \cite{Yu,ChungISIT03,Scutari-Barbarossa-SPAWC03,Yamashitay-Luo,Tse}
or in the presence \cite{Scutari-Barbarossa-GT_PII,Luo-Pang} of spectral
mask constraints. Interestingly, conditions in \cite{Yu,ChungISIT03,Scutari-Barbarossa-SPAWC03,Yamashitay-Luo,Tse}
and \cite{Luo-Pang} imply our condition (\ref{(C0)}), which is more relaxed as shown next.
Let \begin{equation}
\mathbf{\Upsilon}\triangleq\left(\mathbf{I}-\overline{\mathbf{S}}_{\text{low}}^{\max}\right)^{-1}\overline{\mathbf{S}}_{\text{upp}}^{\max},\label{Gamma_matrix}\end{equation}
 with $\overline{\mathbf{S}}_{\text{low}}^{\max}$ and $\overline{\mathbf{S}}_{\text{upp}}^{\max}$
denoting the strictly lower and strictly upper triangular part of
the matrix $\overline{\mathbf{S}}^{\max}$, respectively, and $\overline{\mathbf{S}}^{\max}$
is defined similar to $\mathbf{S}^{\max}$ in (\ref{Hmax}), but taking the maximum
over the whole set $\{1,\ldots,N\}.$ The relationship between
(sufficient) conditions for the convergence of sequential IWFA as
derived in \cite{Yu,ChungISIT03,Scutari-Barbarossa-SPAWC03,Yamashitay-Luo, Luo-Pang, Tse}
and condition (\ref{(C0)})%
\footnote{Recall that condition (\ref{(C0)}) guarantees also the convergence
of the more general asynchronous IWFA, as stated in Theorem \ref{Theo-AIWFA}.%
} is given in the following corollary of Theorem \ref{Theo-Sim_seq_IWFA}.

\noindent \begin{corollary} \label{Corollary:Conditions of the others}Sufficient
conditions for (\ref{(C0)}) in Theorem $1$ are \cite{Yu}$-$\cite{Scutari-Barbarossa-SPAWC03}%
\footnote{In \cite{Yu}, the authors derived conditions (\ref{Chung SF}) for
a game composed by $Q=2$ users and in the absence of spectral mask
constraints. 
}\begin{equation}
\max\limits _{k\in\{1,\ldots,N\}}\left\{ \dfrac{|\bar{H}_{rq}(k)|^{2}}{|\bar{H}_{qq}(k)|^{2}}\right\} \dfrac{d_{qq}^{\alpha}}{d_{rq}^{\alpha}}\dfrac{P_{r}}{P_{q}}<\dfrac{1}{Q-1},\hspace{1.5cm}\forall\text{ }r,q\in\Omega,\,q\neq r,\tag{C4}\label{Chung SF}\end{equation}
 or \cite{Yamashitay-Luo}%
\begin{equation}
\max\limits_{k\in\{1,\ldots,N\}}\left\{  \dfrac{|\bar{H}_{rq}(k)|^{2}}%
{|\bar{H}_{qq}(k)|^{2}}\right\}  \dfrac{d_{qq}^{\alpha}}{d_{rq}^{\alpha}%
}\dfrac{P_{r}}{P_{q}}<\dfrac{1}{2Q-3},\hspace{1.5cm}\forall\text{ }r,q\in\Omega,\,q\neq r, \tag{C5}\label{Luo-Yamashitay}%
\end{equation}
or \cite{Luo-Pang}\begin{equation}
\rho\left(\mathbf{\Upsilon}\right)<1,\tag{C6}\label{Luo_Cond}\end{equation}
 where $\mathbf{\Upsilon}$ is defined in (\ref{Gamma_matrix}). \end{corollary}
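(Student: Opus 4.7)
The strategy is to reduce each of (C4)--(C6) to the spectral-radius condition $\rho(\overline{\mathbf{S}}^{\max}) < 1$ on the ``coarser'' matrix $\overline{\mathbf{S}}^{\max}$, and then deduce $\rho(\mathbf{S}^{\max}) < 1$ by an entrywise majorization argument.

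First, since $\mathcal{D}_q \cap \mathcal{D}_r \subseteq \{1,\ldots,N\}$ by the definition of the sets $\mathcal{D}_q$, the maximum defining $[\mathbf{S}^{\max}]_{qr}$ is taken over a subset of the index set defining $[\overline{\mathbf{S}}^{\max}]_{qr}$, so $0 \leq \mathbf{S}^{\max} \leq \overline{\mathbf{S}}^{\max}$ entrywise. Both matrices are nonnegative, hence by the monotonicity of the spectral radius on the cone of nonnegative matrices (a standard consequence of the Perron--Frobenius theorem), $\rho(\mathbf{S}^{\max}) \leq \rho(\overline{\mathbf{S}}^{\max})$. It therefore suffices to show that each of (C4), (C5), and (C6) individually implies $\rho(\overline{\mathbf{S}}^{\max}) < 1$.

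For (C4) and (C5), the content of the hypothesis is that every off-diagonal entry of $\overline{\mathbf{S}}^{\max}$ is strictly bounded above by $1/(Q-1)$ and $1/(2Q-3)$, respectively. Since the diagonal of $\overline{\mathbf{S}}^{\max}$ is zero and each row contains exactly $Q-1$ off-diagonal entries, these bounds yield $\|\overline{\mathbf{S}}^{\max}\|_{\infty} < 1$ (or $\|(\overline{\mathbf{S}}^{\max})^{T}\|_{1} < 1$, depending on which side of the subscript convention in (C4)--(C5) one lines up with the rows of $\overline{\mathbf{S}}^{\max}$); the classical inequality $\rho(\mathbf{A}) \leq \|\mathbf{A}\|$ valid for any induced matrix norm then gives $\rho(\overline{\mathbf{S}}^{\max}) < 1$. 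For (C6), observe that the zero-diagonal splitting $\overline{\mathbf{S}}^{\max} = \overline{\mathbf{S}}_{\text{low}}^{\max} + \overline{\mathbf{S}}_{\text{upp}}^{\max}$ is a regular splitting of a nonnegative matrix and that $\mathbf{\Upsilon} = (\mathbf{I} - \overline{\mathbf{S}}_{\text{low}}^{\max})^{-1} \overline{\mathbf{S}}_{\text{upp}}^{\max}$ is precisely the associated Gauss--Seidel iteration matrix. The Stein--Rosenberg theorem on nonnegative matrices then gives the equivalence $\rho(\overline{\mathbf{S}}^{\max}) < 1$ if and only if $\rho(\mathbf{\Upsilon}) < 1$, so (C6) directly yields the desired bound.

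The main subtlety is purely bookkeeping: matching the asymmetric subscript pattern $|\bar{H}_{rq}|/|\bar{H}_{qq}|$ appearing in (C4)--(C5) to the correct induced matrix norm on $\overline{\mathbf{S}}^{\max}$ (invoking $\rho(\mathbf{A}) = \rho(\mathbf{A}^{T})$ when needed), and citing the precise nonnegative-matrix form of the Stein--Rosenberg theorem for (C6). Once those two points are in place, each of the three implications is immediate, and combined with the majorization $\rho(\mathbf{S}^{\max}) \leq \rho(\overline{\mathbf{S}}^{\max})$ established in the first step, they yield condition (C1).
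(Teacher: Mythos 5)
Your proposal is correct, and its skeleton matches the paper's: majorize $\mathbf{S}^{\max}\leq\overline{\mathbf{S}}^{\max}$ entrywise, use monotonicity of the spectral radius of nonnegative matrices, and reduce each of (C4)--(C6) to $\rho\left(\overline{\mathbf{S}}^{\max}\right)<1$. The differences are in the tools. For (C4)--(C5) the paper does not bound $\rho\left(\overline{\mathbf{S}}^{\max}\right)$ by the induced $\infty$-norm directly as you do; it routes the argument through Corollary \ref{Corollary-C1-C2}, plugging $\mathbf{w}=\mathbf{1}$, $\mathcal{D}_{q}=\{1,\ldots,N\}$ (and $P_{q}=P_{r}$) into condition (C2), and disposes of (C5) simply by observing it is stronger than (C4) since $1/(2Q-3)\leq1/(Q-1)$ for $Q\geq2$ --- substantively the same row-sum computation you perform. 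For (C6) the routes genuinely diverge: the paper establishes the two-way equivalence $\rho\left(\mathbf{\Upsilon}\right)<1\Leftrightarrow\rho\left(\overline{\mathbf{S}}^{\max}\right)<1$ by applying a $\mathbf{K}$-matrix lemma from the linear-complementarity literature (Lemma 5.3.14 of Cottle--Pang--Stone) twice, after checking that $\mathbf{I}-\overline{\mathbf{S}}_{\text{low}}^{\max}$ is a $\mathbf{K}$-matrix; you instead recognize $\mathbf{\Upsilon}$ as the Gauss--Seidel iteration matrix associated with the nonnegative zero-diagonal Jacobi matrix $\overline{\mathbf{S}}^{\max}$ and invoke the Stein--Rosenberg theorem. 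Both are valid and essentially equivalent in strength; yours has the advantage of using a classical iterative-methods result (which moreover yields $\rho\left(\mathbf{\Upsilon}\right)\leq\rho\left(\overline{\mathbf{S}}^{\max}\right)$ in the convergent regime, i.e., the Gauss--Seidel comparison), while the paper's stays within the Z/P/K-matrix machinery it already cites. One small caveat applying equally to both proofs: the index pattern $|\bar{H}_{rq}(k)|^{2}/|\bar{H}_{qq}(k)|^{2}$, $d_{qq}/d_{rq}$, $P_{r}/P_{q}$ in (C4)--(C5) does not literally coincide with either $[\overline{\mathbf{S}}^{\max}]_{qr}$ or $[\overline{\mathbf{S}}^{\max}]_{rq}$ under the paper's channel-index convention; the paper finesses this by setting $P_{q}=P_{r}$, and you flag it as a transpose/bookkeeping issue, which is an acceptable level of care here.
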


\begin{proof} See Appendix \ref{proof_Corollary:Conditions of the others}.
\end{proof}

\medskip{}

Since the convergence conditions in Corollary \ref{Corollary:Conditions of the others}
depend on the channel realizations $\left\{ \bar{H}_{qr}(k)\right\} $
and on the distances $\left\{ d_{qr}\right\} ,$ there is a nonzero
probability that they are not satisfied for a given channel realization,
drawn from a given probability space. To compare the range of validity
of our conditions vs. the conditions available in the literature,
we tested them over a set of channel impulse responses generated as
vectors composed of $L=6$ i.i.d. complex Gaussian random variables
with zero mean and unit variance (multipath Rayleigh fading model).
Each user transmits over a set of $N=16$ subcarriers. We consider
a multicell cellular network as depicted in Figure \ref{Fig_1a},
composed by $7$ (regular) hexagonal cells, sharing the same band.
Hence, the transmissions from different cells typically interfere
with each other. For the simplicity of representation, we assume that
in each cell there is only one active link, corresponding to the transmission
from the base station (BS) to a mobile terminal (MT) placed at a corner
of the cell. According to this geometry, each MT receives a useful
signal that is comparable, in average sense, with the interference
signal transmitted by the BSs of two adjacent cells. The overall network
is modeled as a set of seven wideband interference channels. In Figure
\ref{Fig_1b}, we plot the probability that conditions (\ref{(C0)}),
(\ref{Chung SF}) and (\ref{Luo_Cond}) are satisfied versus the (normalized)
distance $r\in[0,1)$ (see Figure \ref{Fig_1a}), between each MT
and his BS (assumed to be equal for all the MT/BS pairs). We tested
our conditions considering the set $\mathcal{D}_{q},$ obtained using
the algorithm described in \cite{Scutari-Barbarossa-GT_PI}. %
\begin{figure}
 \vspace{-0.8cm}

\begin{centering}
\subfigure[Multicell cellular system ]{\includegraphics[width=7cm,height=7cm]{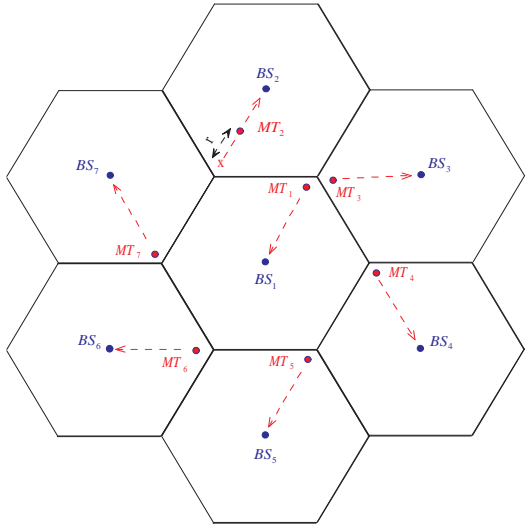}\label{Fig_1a}}
\hspace{0.3cm}\subfigure[Probability of (\protect\ref{(C0)}), (\protect\ref{Chung SF}) and (\protect\ref{Luo_Cond}) versus $r$.]{
\includegraphics[width=9.2cm,height=7cm]{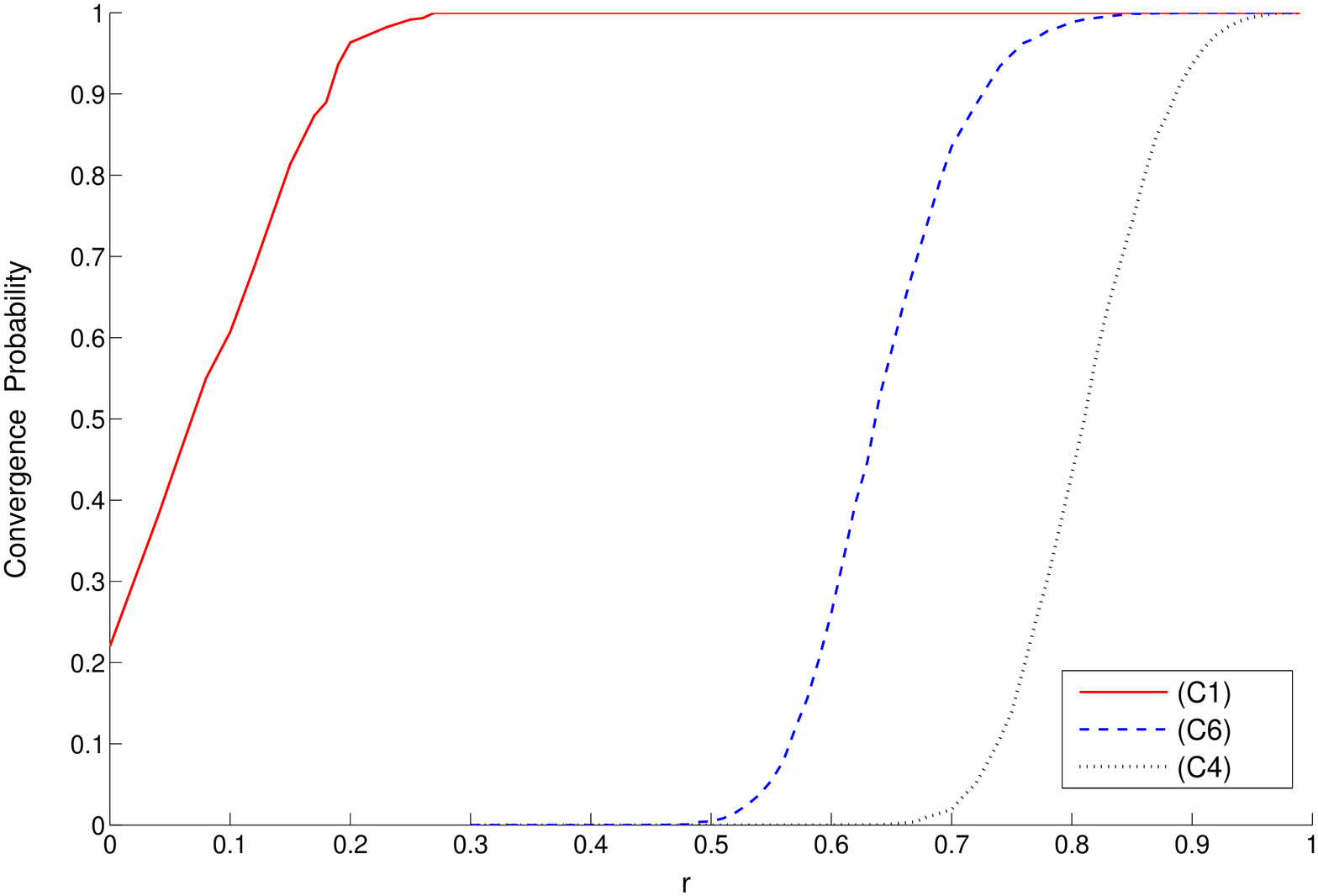}\label{Fig_1b}}\vspace{-0.2cm}
 
\par\end{centering}

\caption{{\protect{\small Probability of (\protect\ref{(C0)}), (\protect\ref{Chung SF})
and (\protect\ref{Luo_Cond}) versus $r$ {[}subplot (b)] for a $7$
cell (downlink) cellular system {[}subplot (a)]; $Q=7$ , $N=16,$
$\gamma=2.5,$ $P_{q}=P_{r}$, $\Gamma_{q}=1$, $P_{q}/\sigma_{q}^{2}=7$dB,
$\forall r,q\in\Omega,$ and $\boldsymbol{w}=1$.}}}


\end{figure}

As expected, the probability of guaranteeing convergence increases
as each MT approaches his BS (i.e., $r\rightarrow1$). What is worthwhile
noticing is that our condition (\ref{(C0)}) significantly enlarges
(\ref{Chung SF}) and (\ref{Luo_Cond}), since the probability that
(\ref{(C0)}) is satisfied is always much larger than (\ref{Chung SF})
and (\ref{Luo_Cond}).

\medskip{}

\addtocounter{rem}{1}
\noindent \textbf{Remark {\therem\ - }Sequential versus
simultaneous IWFA:} Since the simultaneous IWFA in Algorithm \ref{Algo_SIWFA}
is still based on the waterfilling solution (\ref{WF_mask}), it keeps
the most appealing features of the sequential IWFA, namely its low-complexity
and distributed nature. In addition, it allows all the users to update
their optimal power allocation \textit{simultaneously}. Hence, the
simultaneous IWFA is faster than the sequential IWFA, especially if
the number of active users in the network is large. A quantitative
comparison between the sequential and simultaneous IWFAs, in terms
of convergence speed, is given in \cite{Scutari-Barbarossa-GT_PII}.
In \cite{Scutari-ITA07}, we also provided a closed form expression
of the error estimates as a function of the iteration index, obtained
by the sequential and simultaneous IWFAs.

\section{Conclusion}

\noindent \label{Sec:Conclusion}In this paper, we have studied the
maximization of the information rates for the Gaussian frequency-selective
interference channel, given constraints on the transmit power and
the spectral masks on each link. We have formulated the optimization
problem as a strategic noncooperative game and we have proposed a
novel, totally asynchronous iterative distributed algorithm, named
asynchronous IWFA, to reach the Nash equilibria of the game. This
algorithm contains as special cases the well-known sequential IWFA
and the recently proposed simultaneous IWFA, where the users update
their strategies sequentially and simultaneously, respectively. The
main advantage of the proposed algorithm is that no rigid scheduling
in the updates of the users is required: Users are allowed to choose
their own strategies whenever they want and some users may even use
an outdated version of the measured multiuser interference. This relaxes
the coordination requirements among the users significantly. Finally,
we have provided the conditions ensuring the global convergence of
the asynchronous IWFA to the unique NE of the game. Interestingly,
our convergence conditions do not depend on the specific updating
scheduling performed by the users and, hence, they represent a unified
set of convergence conditions for all the algorithms that can be seen
as special cases of the asynchronous IWFA.\bigskip

\noindent \textbf{\Large Appendix} 

\appendix
\vspace{-0.4cm}

\section{Proof of Theorems \ref{Theo-AIWFA} and \ref{Theo-AIWFA_memory}}

\label{Proof_Theorems_1_and_2}We start with some definitions and
intermediate results that will be instrumental to prove Theorems \ref{Theo-AIWFA}
and \ref{Theo-AIWFA_memory}.

\noindent \textbf{Properties of the waterfilling mapping.} For technical
reasons, we define the admissible set ${\mathscr{P}}^{\text{eff}}$
$={\mathscr{P}}_{1}^{\text{eff}}\times\cdots\times{\mathscr{P}}_{Q}^{\text{eff}}\subseteq{\mathscr{P},}$
where \begin{equation}
{\mathscr{P}}_{q}^{\text{eff}}\triangleq\left\{ \mathbf{p}_{q}\in{\mathscr{P}}_{q}\text{ with }p_{q}(k)=0\text{ }\forall k\notin\mathcal{D}_{q}\right\} ,\label{X_q set}\end{equation}
 is the subset of ${\mathscr{P}}_{q}$ containing all the feasible
power allocations of user $q$, with zero power over the carriers
that user $q$ would never use in any of its waterfilling solutions (\ref{WF_mask}), against
any admissible strategy of the others. Observe that the game does
not change if we use ${\mathscr{P}}^{\text{eff}}$ instead of the
original ${\mathscr{P}}$. For any given $\{\alpha_{q}\}_{q\in\Omega},$
with $\alpha_{q}\in\lbrack0,1),$ let $\mathbf{T}(\mathbf{p})=\left(\mathbf{T}_{q}(\mathbf{p})\right)_{q\in\Omega}:{\mathscr{P}}^{\text{eff}}{\mapsto\mathscr{P}}^{\text{eff}}$
be the mapping defined, for each $q,$ as\begin{equation}
\mathbf{T}_{q}(\mathbf{p})\triangleq\alpha_{q}\ \mathbf{p}_{q}+(1-\alpha_{q})\mathsf{WF}_{q}\left(\mathbf{p}_{-q}\right),\quad\mathbf{p}\in{\mathscr{P}}^{\text{eff}}\text{,}\label{Mapping_T}\end{equation}
 where $\mathsf{WF}_{q}\left(\mathbf{p}_{-q}\right):{\mathscr{P}}_{-q}^{\text{eff}}{\mapsto\mathscr{P}}_{q}^{\text{eff}}$
\ is the waterfilling operator defined in (\ref{WF_mask}). Observe
that all the Nash equilibria of game ${\mathscr{G}}$ correspond to
the fixed points in ${\mathscr{P}}^{\text{eff}}$ of the mapping\ $\mathbf{T}$
in (\ref{Mapping_T}). Hence, the existence of at least one fixed
point for $\mathbf{T}$ is guaranteed by {Proposition \ref{Existence}}.

Given $\mathbf{T}$ in (\ref{Mapping_T}) and some $\mathbf{w}\triangleq[w_{q},\ldots,w_{Q}]^{T}>\mathbf{0}$,
let $\left\Vert \cdot\right\Vert _{2,\text{block}}^{\mathbf{w}}$
denote the (vector) block-maximum norm$,$ defined as \cite{Bertsekas Book-Parallel-Comp}
\begin{equation}
\left\Vert \mathbf{T}(\mathbf{p})\right\Vert _{2,\text{block}}^{\mathbf{w}}\triangleq\max_{q\in\Omega}\frac{\left\Vert \mathbf{T}_{_{q}}(\mathbf{p})\right\Vert _{2}}{w_{q}},\text{ }\quad\label{block_max_weight_norm}\end{equation}
 where $\left\Vert \mathbf{\cdot}\right\Vert _{2}$ is the Euclidean
norm. Let $\left\Vert \mathbf{\cdot}\right\Vert _{\infty,\text{vec}}^{\mathbf{w}}$
be the \emph{vector} weighted maximum norm$,$ defined as \cite{Horn-book}
\begin{equation}
\left\Vert \mathbf{x}\right\Vert _{\infty,\text{vec}}^{\mathbf{w}}\triangleq\max_{q\in\Omega}\frac{\left\vert x_{q}\right\vert }{w_{q}},\quad\mathbf{w>0,}\text{ }\quad\mathbf{x\in\mathbb{R}}^{Q},\label{weighted_infinity_vector_norm}\end{equation}
 and let $\left\Vert \mathbf{\cdot}\right\Vert _{\infty,\text{mat}}^{\mathbf{w}}$
denote the \emph{matrix} norm induced by $\left\Vert \cdot\right\Vert _{\infty,\text{vec}}^{\mathbf{w}},$
defined as \cite{Horn-book} \begin{equation}
\left\Vert \mathbf{A}\right\Vert _{\infty,\text{mat}}^{\mathbf{w}}\triangleq\max_{q}\frac{1}{w_{q}}{\displaystyle \sum\limits _{r=1}^{Q}}[\mathbf{A}]_{qr}\text{ }w_{r},\text{ }\quad\mathbf{A\in\mathbb{R}}^{Q\times Q}.\label{H_max_weight_norm}\end{equation}
 Finally, we introduce the matrix $\mathbf{S}_{\mathbf{\alpha}}^{\max},$
defined as \begin{equation}
\mathbf{S}_{\mathbf{\alpha}}^{\max}\triangleq\mathbf{D}_{\mathbf{\alpha}}+(\mathbf{I-D}_{\mathbf{\alpha}})\mathbf{S}^{\max},\text{ \ with \ }\mathbf{D}_{\mathbf{\alpha}}\triangleq\operatorname*{diag}(\alpha_{q}\ldots\alpha_{Q}),\label{Hmax_alpha}\end{equation}
 where $\mathbf{S}^{\max}$ is defined in (\ref{Hmax}). The block-contraction
property of mapping $\mathbf{T}$ in (\ref{Mapping_T}) is given in
the following theorem that comes directly from \cite[Proposition 2]{Scutari-Barbarossa-GT_PII}.

\begin{theorem} [Contraction property of mapping $\mathbf{T}$]\label{Theorem_contraction}Given
$\mathbf{w}\triangleq[w_{1},\ldots,w_{Q}]^{T}\mathbf{>0}$ and $\{\alpha_{q}\}_{q\in\Omega},$
with $\alpha_{q}\in\lbrack0,1),$ the mapping $\mathbf{T}$ defined
in (\ref{Mapping_T}) satisfies\begin{equation}
\left\Vert \mathbf{T(p}^{(1)}\mathbf{)}-\mathbf{T(p}^{(2)}\mathbf{)}\right\Vert _{2,\mathrm{block}}^{\mathbf{w}}\leq\Vert\mathbf{S}_{\mathbf{\alpha}}^{\max}\Vert_{\infty,\text{\emph{mat}}}^{\mathbf{w}}\left\Vert \mathbf{p}^{(1)}-\mathbf{p}^{(2)}\right\Vert _{2,\mathrm{block}}^{\mathbf{w}},\quad\forall\mathbf{p}^{(1)},\text{ }\mathbf{p}^{(2)}\in{\mathscr{P}}^{\text{eff}}\text{ }\mathbf{,}\label{b_modulus_contraction}\end{equation}
 where $\left\Vert \cdot\right\Vert _{2,\text{block}}^{\mathbf{w}},$
$\left\Vert \cdot\right\Vert _{\infty,\text{mat}}^{\mathbf{w}}$ and
$\mathbf{S}_{\mathbf{\alpha}}^{\max}$ are defined in (\ref{block_max_weight_norm}),
(\ref{H_max_weight_norm}) and (\ref{Hmax_alpha}), respectively.
If $\Vert\mathbf{S}_{\mathbf{\alpha}}^{\max}\Vert_{\infty,\text{\emph{mat}}}^{\mathbf{w}}<1\mathbf{,}$
then mapping $\mathbf{T\ }$is a \textit{block-contraction with modulus}
$\Vert\mathbf{S}_{\mathbf{\alpha}}^{\max}\Vert_{\infty,\text{\emph{mat}}}^{\mathbf{w}}.$
\end{theorem}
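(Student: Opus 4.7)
The plan is to prove the block-contraction in three stages: reinterpret the waterfilling operator $\mathsf{WF}_q$ as a Euclidean projection and invoke its 1-Lipschitz property; bound the resulting interference-vector variation by an $\mathbf{S}^{\max}$-weighted sum of the other players' differences; and fold in the convex combination with $\alpha_q$ before passing to the weighted block-max norm.

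First I would establish that $\mathsf{WF}_q(\mathbf{p}_{-q})$ coincides with the orthogonal projection of $-\boldsymbol{\mathsf{insr}}_q(\mathbf{p}_{-q})$ onto ${\mathscr{P}}_q^{\text{eff}}$, by observing that the KKT conditions of $\min_{\mathbf{p}_q\in{\mathscr{P}}_q^{\text{eff}}}\frac{1}{2}\|\mathbf{p}_q+\boldsymbol{\mathsf{insr}}_q\|_2^2$ reproduce exactly the clipped expression in (\ref{WF_mask}), with the water-level $\mu_q$ playing the role of the Lagrange multiplier of the simplex constraint. Non-expansiveness of the Euclidean projection, combined with the fact that the projection onto ${\mathscr{P}}_q^{\text{eff}}$ is insensitive to the components of its argument outside $\mathcal{D}_q$, then yields
\[
\bigl\|\mathsf{WF}_q(\mathbf{p}_{-q}^{(1)})-\mathsf{WF}_q(\mathbf{p}_{-q}^{(2)})\bigr\|_2 \leq \Bigl(\sum_{k\in\mathcal{D}_q}\bigl(\mathsf{insr}_q^{(1)}(k)-\mathsf{insr}_q^{(2)}(k)\bigr)^2\Bigr)^{1/2}.
\]

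Next, since $\mathbf{p}_r^{(j)}$ vanishes outside $\mathcal{D}_r$ on ${\mathscr{P}}^{\text{eff}}$, the explicit formula $\mathsf{insr}_q^{(1)}(k)-\mathsf{insr}_q^{(2)}(k)=\sum_{r\neq q}\frac{|H_{qr}(k)|^2}{|H_{qq}(k)|^2}(p_r^{(1)}(k)-p_r^{(2)}(k))$ has nonzero contributions only on $k\in\mathcal{D}_q\cap\mathcal{D}_r$. Minkowski's inequality followed by the elementary bound $\|\mathbf{a}\cdot\mathbf{b}\|_2\leq\|\mathbf{a}\|_\infty\|\mathbf{b}\|_2$ (componentwise product), together with $|H_{qr}(k)|^2=|\bar H_{qr}(k)|^2 P_r/d_{qr}^\gamma$, produces the estimate $\sum_{r\neq q}[\mathbf{S}^{\max}]_{qr}\|\mathbf{p}_r^{(1)}-\mathbf{p}_r^{(2)}\|_2$ whose bracketed factor is exactly the $(q,r)$-entry of $\mathbf{S}^{\max}$ from (\ref{Hmax}). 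Combining with the identity $\mathbf{T}_q(\mathbf{p}^{(1)})-\mathbf{T}_q(\mathbf{p}^{(2)})=\alpha_q(\mathbf{p}_q^{(1)}-\mathbf{p}_q^{(2)})+(1-\alpha_q)(\mathsf{WF}_q(\mathbf{p}_{-q}^{(1)})-\mathsf{WF}_q(\mathbf{p}_{-q}^{(2)}))$ via the triangle inequality, and recognizing that $\alpha_q$ and $(1-\alpha_q)[\mathbf{S}^{\max}]_{qr}$ are precisely the diagonal and off-diagonal entries of $\mathbf{S}_{\boldsymbol{\alpha}}^{\max}$ from (\ref{Hmax_alpha}), yields for each $q$ the componentwise bound $\|\mathbf{T}_q(\mathbf{p}^{(1)})-\mathbf{T}_q(\mathbf{p}^{(2)})\|_2 \leq \sum_{r=1}^{Q}[\mathbf{S}_{\boldsymbol{\alpha}}^{\max}]_{qr}\|\mathbf{p}_r^{(1)}-\mathbf{p}_r^{(2)}\|_2$. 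Dividing by $w_q$, rewriting each summand as $([\mathbf{S}_{\boldsymbol{\alpha}}^{\max}]_{qr}\,w_r/w_q)\cdot(\|\mathbf{p}_r^{(1)}-\mathbf{p}_r^{(2)}\|_2/w_r)$, and taking the maximum first over $r$ and then over $q$, delivers exactly $\|\mathbf{S}_{\boldsymbol{\alpha}}^{\max}\|_{\infty,\mathrm{mat}}^{\mathbf{w}}\cdot\|\mathbf{p}^{(1)}-\mathbf{p}^{(2)}\|_{2,\mathrm{block}}^{\mathbf{w}}$, which is the claimed inequality; the block-contraction assertion when this modulus is strictly below one is then immediate from the definition.

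I expect the main technical obstacle to be the projection identification in the first stage: the waterfilling formula (\ref{WF_mask}) does not manifestly look like a Euclidean projection, because the water-level $\mu_q$ depends implicitly on $\boldsymbol{\mathsf{insr}}_q$ through the normalization $(1/N)\sum_k p_q(k)=1$, and one has to check carefully that this implicit dependence is exactly the Lagrange multiplier of the simplex constraint in the corresponding projection problem so that the 1-Lipschitz property carries over verbatim. Once that identification is settled, the rest is Minkowski plus an $\ell_\infty$-$\ell_2$ estimate plus bookkeeping; the restriction of all the maxima to $\mathcal{D}_q\cap\mathcal{D}_r$, which is what tightens $\mathbf{S}^{\max}$ relative to naive bounds over all carriers, is enforced simultaneously by ${\mathscr{P}}^{\text{eff}}$ on the input side and by the structure of the projection onto ${\mathscr{P}}_q^{\text{eff}}$ on the output side.
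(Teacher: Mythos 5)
Your proof is correct and follows essentially the route the paper relies on: the paper states Theorem \ref{Theorem_contraction} without proof, citing \cite[Proposition 2]{Scutari-Barbarossa-GT_PII}, whose argument is precisely your chain — identify $\mathsf{WF}_{q}$ as the Euclidean projection of $-\boldsymbol{\mathsf{insr}}_{q}$ onto the feasible set, apply non-expansiveness, extract $\mathbf{S}^{\max}$ via Minkowski plus an $\ell_{\infty}$--$\ell_{2}$ bound restricted to $\mathcal{D}_{q}\cap\mathcal{D}_{r}$, and pass to the weighted block-maximum norm. The only step worth stating explicitly is that the projection onto ${\mathscr{P}}_{q}$ in (\ref{WF_mask}) coincides with the projection onto ${\mathscr{P}}_{q}^{\text{eff}}$ because, by the definition of $\mathcal{D}_{q}^{\min}$ in (\ref{D_q}), the former always lands in ${\mathscr{P}}_{q}^{\text{eff}}$; once that is noted, your restriction of the interference variation to the carriers in $\mathcal{D}_{q}$ is fully justified.
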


\bigskip{}

\noindent \textbf{Asynchronous convergence theorem} \cite{Bertsekas Book-Parallel-Comp}.
Let $\mathcal{X}_{1},$ $\mathcal{X}_{2},\ldots,\mathcal{X}_{Q}$
be given sets, and let $\mathcal{X}$ be their Cartesian product,
i.e., \begin{equation}
\mathcal{X=X}_{1}\times\mathcal{X}_{2}\times\ldots\times\mathcal{X}_{Q}.\label{X_set}\end{equation}
 Let $\mathbf{f}_{q}:\mathcal{X\mapsto X}_{q}$ be a given vector
function and let $\mathbf{f}:\mathcal{X\mapsto X}$\ be the mapping
from $\mathcal{X}$\ to $\mathcal{X}$, defined as $\mathbf{f(x)=}\left(\mathbf{f}_{1}\mathbf{\mathbf{(x),\ldots,}f}_{Q}\mathbf{\mathbf{(x)}}\right),$
and assumed to admit a fixed point $\mathbf{x}^{\star}=\mathbf{f}(\mathbf{x}^{\star}).$
Consider the following distributed asynchronous iterative algorithm
to reach $\mathbf{x}^{\star}$\vspace{-0.2cm}
 \begin{equation}
\hspace{-0.2cm}\mathbf{x}_{q}^{(n+1)}=\left\{ \hspace{-0.05cm}\begin{array}{ll}
\mathbf{f}_{q}\left(\mathbf{x}_{1}^{(\tau_{1}^{q}(n))},\ldots,\mathbf{x}_{Q}^{(\tau_{Q}^{q}(n))}\right), & \text{if }n\in\mathcal{T}_{q},\\
\,\mathbf{x}_{q}^{(n)}, & \text{otherwise,}\end{array}\right.,\forall q\in\Omega;\label{Theorem_async_conv}\end{equation}
 with $0\leq\tau_{r}^{q}(n)\leq n$ and $\mathcal{T}_{q}$ denoting
the set of times $n$ at which $\mathbf{x}_{q}^{(n)}$ is updated
and satisfying A1$-$A3 of Section \ref{Sec:AIWFA}. Assume that:

\begin{description}
\item [{C.1}] (\emph{Nesting Condition}) There exists a sequence of nonempty
sets $\left\{ \mathcal{X}(n)\right\} $ with \begin{equation}
\ldots\subset\mathcal{X}(n+1)\subset\mathcal{X}(n)\subset\ldots\subset\mathcal{X},\label{contraction sets}\end{equation}
 satisfying the next two conditions.
\item [{C.2}] $($\emph{Synchronous Convergence Condition}$)$ \begin{equation}
\mathbf{f(x)\in}\mathcal{X}(n+1),\qquad\forall n,\text{ and }\mathbf{x\in}\mathcal{X}(n).\label{Synchronous Convergence Condition}\end{equation}
 Furthermore, if $\{\mathbf{y}^{(n)}\}$ is a sequence such that $\mathbf{y}^{(n)}\mathbf{\in}\mathcal{X}(n),$
for every $n,$ then every limit point of $\{\mathbf{y}^{(n)}\}$
is a fixed point of $\mathbf{f(\cdot).}$
\item [{C.3}] $($\emph{Box Condition}$)$ For every $n$ there exist sets
$\mathcal{X}_{q}(n)\subset\mathcal{X}_{q}$ such that\begin{equation}
\mathcal{X}(n)=\mathcal{X}_{1}(n)\times\ldots\times\mathcal{X}_{Q}(n).\label{Box conditions}\end{equation}

\end{description}
Then we have the following\vspace{-0.1cm}
.

\begin{theorem} [{\cite[Proposition 2.1]{Bertsekas Book-Parallel-Comp}}]\label{Theo-Bertsekas}
If the Synchronous Convergence Condition (\ref{Synchronous Convergence Condition})
and the Box Condition (\ref{Box conditions}) are satisfied, and the
starting point $\mathbf{x}^{(0)}\triangleq\left(\mathbf{x}_{1}^{(0)},\ldots,\mathbf{x}_{Q}^{(0)}\right)$
of the algorithm (\ref{Theorem_async_conv}) belongs to $\mathcal{X}(0)$,
then every limit point of $\{\mathbf{x}^{(n)}\}$ given by (\ref{Theorem_async_conv})
is a fixed point of $\mathbf{f(}\cdot\mathbf{).}$ \end{theorem}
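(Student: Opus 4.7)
The plan is to prove the conclusion by induction on $n$, showing that for every $n \geq 0$ there exists a time $N_n$ after which the full iterate $\mathbf{x}^{(k)}$ lies in $\mathcal{X}(n)$. The base case $N_0 = 0$ is precisely the hypothesis $\mathbf{x}^{(0)} \in \mathcal{X}(0)$, and the nesting $\mathcal{X}(n+1) \subset \mathcal{X}(n)$ will ensure that once a coordinate enters a smaller set it remains there, since between updates its value is frozen while an update can only push it further inward via the Synchronous Convergence Condition and the Box Condition.

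The inductive step is the heart of the argument, and this is where assumptions A1--A3 are knit together with the Box and Synchronous Convergence Conditions. Suppose $\mathbf{x}^{(k)} \in \mathcal{X}(n)$ for every $k \geq N_n$. First I would invoke A2 to choose $N'_n \geq N_n$ such that $\tau_r^q(k) \geq N_n$ for all $q,r$ and all $k \geq N'_n$; this purges the stale components from the delayed arguments of $\mathbf{f}_q$. Second, by the Box Condition $\mathcal{X}(n) = \mathcal{X}_1(n) \times \cdots \times \mathcal{X}_Q(n)$, the fact that each $\mathbf{x}_r^{(\tau_r^q(k))}$ lies in $\mathcal{X}_r(n)$ implies that the full delayed tuple lies in $\mathcal{X}(n)$, and then the Synchronous Convergence Condition gives $\mathbf{x}_q^{(k+1)} = \mathbf{f}_q\bigl(\mathbf{x}_1^{(\tau_1^q(k))}, \ldots, \mathbf{x}_Q^{(\tau_Q^q(k))}\bigr) \in \mathcal{X}_q(n+1)$. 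Third, A3 guarantees that every user $q$ updates at some time $\geq N'_n$, so each coordinate eventually enters $\mathcal{X}_q(n+1)$ and, as noted, remains there. Taking $N_{n+1}$ to be the latest of these transition times and reassembling through the Box Condition yields $\mathbf{x}^{(k)} \in \mathcal{X}(n+1)$ for all $k \geq N_{n+1}$.

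To conclude, given any limit point $\bar{\mathbf{x}}$ of $\{\mathbf{x}^{(k)}\}$, I would extract a subsequence $\mathbf{x}^{(k_m)} \to \bar{\mathbf{x}}$ along which $k_m \geq N_m$, so that $\mathbf{x}^{(k_m)} \in \mathcal{X}(m)$ for each $m$; setting $\mathbf{y}^{(m)} = \mathbf{x}^{(k_m)}$ and invoking the second clause of the Synchronous Convergence Condition identifies $\bar{\mathbf{x}}$ as a fixed point of $\mathbf{f}$, which is what is claimed.

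The main obstacle is the bookkeeping in the inductive step rather than any deep estimate: one must choreograph the three asynchronous assumptions with the Box Condition so that a \emph{componentwise} contractive behavior compounds into a statement about the \emph{full} iterate. Without A2 one could never evict outdated components from the delayed tuple; without A3 some coordinate could stall outside $\mathcal{X}_q(n+1)$ indefinitely; and without the Box Condition the synchronous conclusion $\mathbf{f}(\mathbf{x}) \in \mathcal{X}(n+1)$ would not transfer to the asynchronous iterate, since the arguments of different $\mathbf{f}_q$'s are evaluated at different (delayed) times and only the product structure of $\mathcal{X}(n)$ lets us reason coordinate by coordinate and then recombine.
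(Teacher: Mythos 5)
Your proof is correct, and it is essentially the standard argument for this result; note that the paper itself gives no proof of this theorem at all -- it is quoted directly from \cite[Proposition 2.1]{Bertsekas Book-Parallel-Comp} and used as a black box in Appendix A -- so there is no in-paper proof to diverge from. Your induction (a time $N_n$ after which all iterates lie in $\mathcal{X}(n)$; A2 to purge stale delayed components; the Box Condition to reassemble the delayed tuple inside $\mathcal{X}(n)$ and the Synchronous Convergence Condition to push each updated component into $\mathcal{X}_q(n+1)$; A3 to guarantee every coordinate eventually updates; and finally the second clause of the Synchronous Convergence Condition applied to a subsequence $\mathbf{y}^{(m)}=\mathbf{x}^{(k_m)}$ with $k_m\geq N_m$) is precisely the argument in the cited reference. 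One small presentational caveat: the base case is not literally just the hypothesis $\mathbf{x}^{(0)}\in\mathcal{X}(0)$; to have $\mathbf{x}^{(k)}\in\mathcal{X}(0)$ for \emph{all} $k\geq 0$ you need the invariance argument (A1 gives $\tau_r^q(k)\leq k$, so by an inner induction all delayed tuples lie in $\mathcal{X}(0)$, updates land in $\mathcal{X}_q(1)\subset\mathcal{X}_q(0)$, and frozen components stay put) -- which is exactly the freeze/update mechanism you invoke, so this is a matter of stating the base case carefully rather than a gap.
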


\bigskip{}

\vspace{-0.2cm}
We are now ready to prove Theorems 1 and 2 through the following two
steps:

\noindent \emph{Step 1.} We first show that the asynchronous IWFA
in Algorithms 1 and 2 is an instance of the totally asynchronous iterative
algorithm in (\ref{Theorem_async_conv}). Then, using Theorem \ref{Theorem_contraction},
we derive sufficient conditions for \textbf{C.1}-\textbf{C.3}.

\noindent \emph{Step 2.} Invoking Theorem \ref{Theo-Bertsekas},\ we
complete the proof showing that the asynchronous IWFA converges to
the unique NE of ${\mathscr{G}}$ from any starting point, provided
that condition (\ref{(C0)}) is satisfied.

\bigskip{}

\noindent \emph{Step 1.} It is straightforward to see that the asynchronous
IWFA coincides with the algorithm given in (\ref{Theorem_async_conv}),
under the following identifications\vspace{-0.3cm}
 \begin{align}
 & \begin{array}{l}
\mathbf{x}_{q}\Leftrightarrow\mathbf{p}_{q},\quad\mathbf{x}_{q}^{\star}\Leftrightarrow\mathbf{p}_{q}^{\star},\quad\mathcal{X}_{q}\Leftrightarrow{\mathscr{P}}_{q}^{\text{eff}},\medskip\\
\mathbf{f}_{q}\mathbf{(x)}\Leftrightarrow\mathbf{T}_{q}(\mathbf{p}),\end{array}\quad\medskip\forall q\in\Omega,\bigskip\label{equivalences}\\
\  & \text{\ }\mathcal{X}\Leftrightarrow{\mathscr{P}}^{\text{eff}}={\mathscr{P}}_{1}^{\text{eff}}\times\ldots\times{\mathscr{P}}_{Q}^{\text{eff}},\nonumber \end{align}
 where ${\mathscr{P}}_{q}^{\text{eff}}$ and $\mathbf{T}_{q}(\mathbf{p})$
are defined in (\ref{X_q set}) and (\ref{Mapping_T}), respectively.
Observe that, to study the convergence of the asynchronous IWFA, there
is no loss of generality in considering the mapping $\mathbf{T}$
defined in ${\mathscr{P}}^{\text{eff}}\subset{\mathscr{P}}$ instead
of ${\mathscr{P},}$ since all the points produced by the algorithm
(except possibly the initial point, which does not affect the convergence
of the algorithm in the subsequent iterations) as well as the Nash
equilibria of the game are confined, by definition, in ${\mathscr{P}}^{\text{eff}}$.

We consider now conditions \textbf{C.1}-\textbf{C.3 \ }separately.

\noindent \textbf{C.1} (\emph{Nested Condition}) Let $\mathbf{p}^{\star}=\left(\mathbf{p}_{1}^{\star},\ldots,\mathbf{p}_{Q}^{\star}\right)\in{\mathscr{P}}^{\text{eff}}$
be a fixed point of $\mathbf{T}$ in (\ref{Mapping_T}) (or, equivalently
of $\mathbf{f}_{q}$ in (\ref{Theorem_async_conv})) and $\mathbf{p}^{(0)}=\left(\mathbf{p}_{1}^{(0)},\ldots,\mathbf{p}_{Q}^{(0)}\right)\in{\mathscr{P}}^{\text{eff}}$
be any starting point of the asynchronous IWFA. Using the block-maximum
norm $\left\Vert \mathbf{\cdot}\right\Vert _{2,\text{block}}^{\mathbf{w}}$
as defined in (\ref{block_max_weight_norm}), where $\mathbf{w=[}w_{1},\ldots,w_{q}\mathbf{]}^{T}$
is any positive vector, we define the set $\mathcal{X}(n)$ in (\ref{contraction sets})
as \begin{equation}
\mathcal{X}(n)=\left\{ \mathbf{p}\in{\mathscr{P}}^{\text{eff}}:\left\Vert \mathbf{p}-\mathbf{p}^{\star}\right\Vert _{2,\text{block}}^{\mathbf{w}}\leq\beta^{n}\Vert\mathbf{p}^{(0)}-\mathbf{p}^{\star}\Vert_{2,\text{block}}^{\mathbf{w}}\right\} \subset{\mathscr{P}}^{\text{eff}},\label{our-set-X}\end{equation}
 with\begin{equation}
\beta=\beta(\mathbf{w,S}_{\mathbf{\alpha}}^{\max})\triangleq\left\Vert \mathbf{S}_{\mathbf{\alpha}}^{\max}\right\Vert _{\infty}^{\mathbf{w}},\label{alpha_w,H}\end{equation}
 and $\mathbf{S}_{\mathbf{\alpha}}^{\max}$ defined in (\ref{Hmax_alpha}).
It follows from (\ref{our-set-X})  that if \begin{equation}
\beta^{n+1}\Vert\mathbf{p}^{(0)}-\mathbf{p}^{\star}\Vert_{2,\text{block}}^{\mathbf{w}}<\beta^{n}\Vert\mathbf{p}^{(0)}-\mathbf{p}^{\star}\Vert_{2,\text{block}}^{\mathbf{w}},\quad\forall n=0,1,...,\label{inequality_set}\end{equation}
 then we obtain the desired result, i.e., \[
\mathcal{X}(n+1)\subset\mathcal{X}(n)\subset{\mathscr{P}}^{\text{eff}},\quad\forall n=0,1,....\]
 A necessary and sufficient condition for (\ref{inequality_set})
is\begin{equation}
\beta<1.\label{SF_alpha}\end{equation}
 We will assume in the following that (\ref{SF_alpha}) is satisfied.

\noindent \textbf{C.2} (\emph{Synchronous Convergence Condition})
 Let $\mathbf{p}^{(n)}\in\mathcal{X}(n).$ Then, from (\ref{our-set-X}),
it must be that\begin{equation}
\left\Vert \mathbf{p}^{(n)}-\mathbf{p}^{\star}\right\Vert _{2,\text{block}}^{\mathbf{w}}\leq\beta^{n}\Vert\mathbf{p}^{(0)}-\mathbf{p}^{\star}\Vert_{2,\text{block}}^{\mathbf{w}}.\label{e_n_inf}\end{equation}
 Let $\mathbf{p}^{(n+1)}=\mathbf{T(\mathbf{p}}^{(n)}\mathbf{)}$.
Then, we have\begin{equation}
\left\Vert \mathbf{p}^{(n+1)}-\mathbf{p}^{\star}\right\Vert _{2,\text{block}}^{\mathbf{w}}=\left\Vert \mathbf{T}(\mathbf{p}^{(n)})-\mathbf{p}^{\star}\right\Vert _{2,\text{block}}^{\mathbf{w}}\leq\beta\left\Vert \mathbf{p}^{(n)}-\mathbf{p}^{\star}\right\Vert _{2,\text{block}}^{\mathbf{w}}\leq\beta^{n+1}\Vert\mathbf{p}^{(0)}-\mathbf{p}^{\star}\Vert_{2,\text{block}}^{\mathbf{w}},\label{ineq_e3}\end{equation}
 where the first and the second inequalities follow from Theorem \ref{Theorem_contraction}
(using (\ref{b_modulus_contraction}) with $\mathbf{p}^{\star}=\mathbf{T}\left(\mathbf{p}^{\star}\right)$)
and (\ref{e_n_inf}), respectively. Hence, $\mathbf{p}^{(n+1)}\in\mathcal{X}(n+1),$
as required in (\ref{Synchronous Convergence Condition}). Furthermore,
since \[
\lim_{n\rightarrow\infty}\left\Vert \mathbf{p}^{(n)}-\mathbf{p}^{\star}\right\Vert _{2,\text{block}}^{\mathbf{w}}=0,\text{ with }\mathbf{p}^{(n)}\in\mathcal{X}(n),\text{ }\forall n,\]
 the sequence $\{\mathbf{p}^{(n)}\}$ generated from $\mathbf{p}^{(0)}$
by the mapping $\mathbf{T}$ using the simultaneous updating scheme
in (\ref{Synchronous Convergence Condition}) must converge to $\mathbf{p}^{\star}$.
Moreover, it follows from (\ref{SF_alpha}) and Theorem  \ref{Theorem_contraction} that the fixed point $\mathbf{p}^{\star}$
of $\mathbf{T}$ is unique (implied from the fact that the mapping $\mathbf{T}$ is a  block-contraction \cite[Proposition 1.1]{Bertsekas Book-Parallel-Comp}).

\vspace{0.25cm}
\noindent \textbf{C.3} (\emph{Box Condition}) For every $n,$ the set $\mathcal{X}(n)$
in (\ref{our-set-X}) can be decomposed as $\mathcal{X}(n)=\mathcal{X}_{1}(n)\times\ldots\times\mathcal{X}_{Q}(n),$
with
\noindent \begin{equation}
\mathcal{X}_{q}(n)=\left\{ \mathbf{p}_{q}\in{\mathscr{P}}_{q}^{\text{eff}}:\frac{\left\Vert \mathbf{p}_{q}-\mathbf{p}_{q}^{\star}\right\Vert _{2}}{w_{q}}\leq\beta^{n}\Vert\mathbf{p}^{(0)}-\mathbf{p}^{\star}\Vert_{2,\text{block}}^{\mathbf{w}}\right\} \subset{\mathscr{P}}_{q}^{\text{eff}},\quad\forall q\in\Omega.\end{equation}

\noindent \emph{Step 2.} Under (\ref{SF_alpha}), Theorem \ref{Theo-Bertsekas}
is satisfied if the starting point $\mathbf{p}^{(0)}$ of the asynchronous
IWFA is such that $\mathbf{p}^{(0)}\in{\mathscr{P}}^{\text{eff}}.$
The asynchronous IWFA, as given in Algorithm 1 and Algorithm 2, \ is
allowed to start from any arbitrary point $\mathbf{p}^{(0)}$ in ${\mathscr{P}}.$
However, after the first iteration from all the users, the asynchronous
IWFA provides a point in ${\mathscr{P}}^{\text{eff}},$ for any $\mathbf{p}^{(0)}\in{\mathscr{P}}.$
Hence, under (\ref{SF_alpha}), the asynchronous IWFA satisfies Theorem
\ref{Theo-Bertsekas}, after the first iteration, which still guarantees
that every limit point of the sequence generated by the asynchronous
IWFA is a NE of the game ${\mathscr{G}}$. Since condition (\ref{SF_alpha})
is also sufficient for the uniqueness of the NE [recall that, under (\ref{SF_alpha}),  the mapping $\mathbf{T}$ in (\ref{Mapping_T}) is a  block-contraction], the asynchronous
IWFA must converge to this unique NE.

To complete the proof, we just need to show that (\ref{(C0)}) is
equivalent to (\ref{SF_alpha}). \ Since in (\ref{SF_alpha}) each
$\alpha_{q}\in\lbrack0,1),$ we have\begin{equation}
\beta=\left\Vert \mathbf{S}_{\mathbf{\alpha}}^{\max}\right\Vert _{\infty}^{\mathbf{w}}<1\quad\Leftrightarrow\quad\left\Vert \mathbf{S}^{\max}\right\Vert _{\infty}^{\mathbf{w}}<1.\quad\label{SF_eq}\end{equation}
 Since $\mathbf{S}^{\max}$ is a nonnegative matrix, there exists
a positive vector $\overline{\mathbf{w}}$ such that \cite[Corollary 6.1]{Bertsekas Book-Parallel-Comp}\begin{equation}
\left\Vert \mathbf{S}^{\max}\right\Vert _{\infty}^{\overline{\mathbf{w}}}<1\quad\Leftrightarrow\quad\rho\left(\mathbf{S}^{\max}\right)<1.\quad\label{eq_nor_spectral_radius}\end{equation}
 Since the convergence of the asynchronous IWFA is guaranteed under
(\ref{SF_alpha}), for any given $\mathbf{w>0,}$ we can choose $\mathbf{w=}\overline{\mathbf{w}}$
and use (\ref{eq_nor_spectral_radius}).

Conditions (\ref{cond_C1})-(\ref{cond_C2}) in Corollary \ref{Corollary-C1-C2}
can be obtained as follows. Using \cite[Proposition 6.2e]{Bertsekas Book-Parallel-Comp}
 $\rho(\mathbf{S}^{\max})\leq\Vert\mathbf{S}^{\max}\Vert_{\infty}^{\mathbf{w}},\forall\mathbf{w>0,}\,$
 a sufficient condition for the $\Rightarrow$ direction in (\ref{eq_nor_spectral_radius})
is 
$\Vert\mathbf{S}^{\max}\Vert_{\infty}^{\mathbf{w}}<1,$ 
for some given $\mathbf{w>0;}$ which provides (\ref{cond_C1}). Condition
(\ref{cond_C2}) is obtained similarly, still using (\ref{eq_nor_spectral_radius})
and $\rho\left(\mathbf{S}^{\max}\right)=\rho\left(\mathbf{S}^{\max T}\right).$
\hspace{\fill}\rule{1.5ex}{1.5ex}
\vspace{-0.1cm}

\section{Proof of Corollary \ref{Corollary:Conditions of the others}}

\vspace{-0.1cm}

\label{proof_Corollary:Conditions of the others} Since conditions (\ref{cond_C1})-(\ref{cond_C2}) imply  (\ref{(C0)}) (Corollary \ref{Corollary-C1-C2}), the sufficiency of
(\ref{Chung SF}) for (\ref{(C0)}) follows directly setting, in (\ref{cond_C1}),
$P_{q}=P_{r}$ for all $q,  r,$ $\mathcal{D}_{q}=\mathcal{D}_{r}=\{1,\ldots,N\},$
$\mathbf{w=1},$ and using the following upper bound $|\bar{H}_{rq}(k)|^{2}/|\bar{H}_{qq}(k)|^{2}\leq\max\limits _{r\neq q}\left\{ |\bar{H}_{rq}(k)|^{2}/|\bar{H}_{qq}(k)|^{2}\right\} $. Observe that  condition (\ref{Luo-Yamashitay}) is stronger than (\ref{Chung SF}), and thus implies (\ref{(C0)}).

We prove now that condition (\ref{Luo_Cond}) is stronger than  (\ref{(C0)}).
To this end, it is sufficient to show that $\rho\left(\mathbf{\Upsilon}\right)<1$
$\Leftrightarrow\rho\left(\overline{\mathbf{S}}^{\max}\right)<1,$ where $\overline{\mathbf{S}}^{\max}$ is defined after (\ref{Gamma_matrix}),
since $\mathbf{S}^{\max}\leq\overline{\mathbf{S}}^{\max}$ leads to
$\rho\left(\mathbf{S}^{\max}\right)\leq\rho\left(\overline{\mathbf{S}}^{\max}\right)<1$ \cite[Corollary 2.2.22]{CPStone92}.\footnote{The inequality $\mathbf{S}^{\max}\leq\overline{\mathbf{S}}^{\max}$ has to be intended componentwise.}
We first introduce the following intermediate definition and result
\cite{CPStone92}. \begin{definition} \label{Def:K matrix}A matrix
$\mathbf{A}\in\mathbb{R}^{n\times n}$ is said to be a $\mathbf{Z}$-matrix
if its off-diagonal entries are all non-positive. A matrix $\mathbf{A}\in\mathbb{R}^{n\times n}$
is said to be a $\mathbf{P}$-matrix if all its principal minors are
positive. A $\mathbf{Z}$-matrix that is also $\mathbf{P}$ is called
a $\mathbf{K}$-matrix. \end{definition}

\begin{lemma} [{\cite[Lemma $5.3.14$]{CPStone92}}]\label{Lemma-comparison-matrix}Let
$\mathbf{A}\in\mathbb{R}^{n\times n}$ be a $\mathbf{K}$-matrix and
$\mathbf{B}$ a nonnegative matrix. Then $\rho(\mathbf{A}^{-1}\mathbf{B})<1$
if and only if $\mathbf{A-B}$ is a $\mathbf{K}$-matrix. \end{lemma}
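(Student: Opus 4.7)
The plan is to exploit two classical characterizations: (i) a Z-matrix $\mathbf{C}$ is a K-matrix if and only if $\mathbf{C}$ is invertible with $\mathbf{C}^{-1}\geq\mathbf{0}$ entrywise, equivalently if and only if there exists $\mathbf{v}>\mathbf{0}$ with $\mathbf{C}\mathbf{v}>\mathbf{0}$ componentwise; and (ii) for any nonnegative matrix $\mathbf{M}\geq\mathbf{0}$, the Neumann/Collatz--Wielandt criterion $\rho(\mathbf{M})<1 \Leftrightarrow (\mathbf{I}-\mathbf{M})^{-1}=\sum_{k\geq 0}\mathbf{M}^{k}\geq\mathbf{0} \Leftrightarrow$ there exists $\mathbf{v}>\mathbf{0}$ with $\mathbf{M}\mathbf{v}<\mathbf{v}$ componentwise. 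Two preparatory facts are immediate: $\mathbf{A}^{-1}\geq\mathbf{0}$ (since $\mathbf{A}$ is K), hence $\mathbf{A}^{-1}\mathbf{B}\geq\mathbf{0}$; and $\mathbf{A}-\mathbf{B}$ is automatically a Z-matrix since its off-diagonal entries are non-positive minus nonnegative.

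For the forward direction, I would apply the Neumann series. The hypothesis $\rho(\mathbf{A}^{-1}\mathbf{B})<1$ yields $(\mathbf{I}-\mathbf{A}^{-1}\mathbf{B})^{-1}\geq\mathbf{0}$. Factoring $\mathbf{A}-\mathbf{B}=\mathbf{A}(\mathbf{I}-\mathbf{A}^{-1}\mathbf{B})$ then shows $\mathbf{A}-\mathbf{B}$ is invertible with $(\mathbf{A}-\mathbf{B})^{-1}=(\mathbf{I}-\mathbf{A}^{-1}\mathbf{B})^{-1}\mathbf{A}^{-1}\geq\mathbf{0}$, a product of two nonnegative matrices. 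Combined with the Z-matrix property this delivers the K-matrix conclusion via characterization (i).

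For the reverse direction, I would pick any strictly positive $\mathbf{y}$ (e.g., the all-ones vector) and set $\mathbf{v}=(\mathbf{A}-\mathbf{B})^{-1}\mathbf{y}$. Since $(\mathbf{A}-\mathbf{B})^{-1}\geq\mathbf{0}$ is invertible, no row of it is identically zero, so $\mathbf{v}>\mathbf{0}$ strictly. From $(\mathbf{A}-\mathbf{B})\mathbf{v}=\mathbf{y}>\mathbf{0}$, applying $\mathbf{A}^{-1}$ and rearranging gives $\mathbf{v}-\mathbf{A}^{-1}\mathbf{B}\mathbf{v}=\mathbf{A}^{-1}\mathbf{y}$, and the right-hand side is strictly positive by the same ``no zero row'' argument applied to the nonnegative invertible $\mathbf{A}^{-1}$. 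Hence $\mathbf{A}^{-1}\mathbf{B}\mathbf{v}<\mathbf{v}$ componentwise with $\mathbf{v}>\mathbf{0}$; the Collatz--Wielandt half of (ii) applied to the nonnegative $\mathbf{A}^{-1}\mathbf{B}$ then yields $\rho(\mathbf{A}^{-1}\mathbf{B})<1$.

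The main obstacle is the reverse direction, specifically establishing the \emph{strict} componentwise positivity of $\mathbf{v}$ and of $\mathbf{A}^{-1}\mathbf{y}$; without strictness, one is only entitled to $\rho(\mathbf{A}^{-1}\mathbf{B})\leq 1$, which is insufficient. This is handled by choosing $\mathbf{y}$ strictly positive and invoking invertibility of the nonnegative matrices $\mathbf{A}^{-1}$ and $(\mathbf{A}-\mathbf{B})^{-1}$ to exclude all-zero rows. A direct Neumann-style mirror of the forward argument does not close the loop here because $(\mathbf{I}-\mathbf{A}^{-1}\mathbf{B})^{-1}=(\mathbf{A}-\mathbf{B})^{-1}\mathbf{A}$ involves $\mathbf{A}$ itself (which has negative off-diagonals), so nonnegativity of this product cannot be read off from that of its factors; the Collatz--Wielandt route is what makes the reverse implication go through.
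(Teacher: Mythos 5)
Your proof is correct. Note, however, that the paper does not actually prove this lemma: it is imported verbatim as Lemma 5.3.14 of the cited monograph by Cottle, Pang and Stone, so there is no internal argument to compare against. What you have supplied is the standard self-contained M-matrix (regular-splitting) proof, and it is sound in both directions: the forward direction correctly combines $\mathbf{A}^{-1}\geq\mathbf{0}$, the Neumann series for $(\mathbf{I}-\mathbf{A}^{-1}\mathbf{B})^{-1}$, and the factorization $\mathbf{A}-\mathbf{B}=\mathbf{A}(\mathbf{I}-\mathbf{A}^{-1}\mathbf{B})$ to get a nonnegative inverse for the Z-matrix $\mathbf{A}-\mathbf{B}$; the reverse direction correctly secures \emph{strict} positivity of $\mathbf{v}=(\mathbf{A}-\mathbf{B})^{-1}\mathbf{y}$ and of $\mathbf{A}^{-1}\mathbf{y}$ from the no-zero-row property of nonnegative invertible matrices, and then concludes $\rho(\mathbf{A}^{-1}\mathbf{B})<1$ from $\mathbf{A}^{-1}\mathbf{B}\mathbf{v}<\mathbf{v}$ with $\mathbf{v}>\mathbf{0}$ -- which is exactly the weighted maximum-norm bound $\rho(\mathbf{M})\leq\Vert\mathbf{M}\Vert_{\infty,\text{mat}}^{\mathbf{w}}$ that the paper itself invokes elsewhere (Appendix A, via \cite[Prop.~6.2e, Cor.~6.1]{Bertsekas Book-Parallel-Comp}). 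So your route buys a proof that stays entirely within the toolkit the paper already uses, at the cost of quoting the two classical characterizations in (i)--(ii); if one wanted the argument fully self-contained one would still need to justify the equivalence ``Z-matrix with nonnegative inverse $\Leftrightarrow$ K-matrix (all principal minors positive)'', which you, like the paper, take from the literature. Your closing observation about why a naive Neumann-series mirror fails in the reverse direction, and why strictness of the inequalities is essential (otherwise only $\rho\leq 1$ follows), is accurate and well placed.
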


According to Definition \ref{Def:K matrix}, $\mathbf{I-}\overline{\mathbf{S}}_{\text{low}}^{\max}$
is a $\mathbf{Z}$-matrix. Since all principal minors of $\mathbf{I-}\overline{\mathbf{S}}_{\text{low}}^{\max}$
are equal to one (recall that $\mathbf{I-}\overline{\mathbf{S}}_{\text{low}}^{\max}$ is a lower triangular matrix with all ones on the main diagonal), $\mathbf{I-}\overline{\mathbf{S}}_{\text{low}}^{\max}$
is also a $\mathbf{P}$-matrix, and thus a $\mathbf{K}$-matrix. Invoking
Lemma \ref{Lemma-comparison-matrix} we obtain the following chain
of equivalences 
 \begin{equation}
\rho\left(\mathbf{\Upsilon}\right)<1\quad\Leftrightarrow\quad\mathbf{I}-\overline{\mathbf{S}}^{\max}\text{ \ is a }\mathbf{K}\text{-matrix}\quad\Leftrightarrow\quad\rho\left(\overline{\mathbf{S}}^{\max}\right)<1,\label{equiv_1}\end{equation}
 where the first and the second equivalence follows from Lemma \ref{Lemma-comparison-matrix}
using the correspondences $\mathbf{A}=\mathbf{I-}\overline{\mathbf{S}}_{\text{low}}^{\max}$,
$\mathbf{B}=\overline{\mathbf{S}}_{\text{upp}}^{\max}$, $\mathbf{I}-\overline{\mathbf{S}}_{\text{low}}^{\max}-\overline{\mathbf{S}}_{\text{upp}}^{\max}=\mathbf{I}-\overline{\mathbf{S}}^{\max}$
and $\mathbf{A}=\mathbf{I}$, $\mathbf{B}=\overline{\mathbf{S}}^{\max}$,
respectively. It follows from (\ref{equiv_1}) that $\rho\left(\mathbf{\Upsilon}\right)<1\Leftrightarrow\rho\left(\overline{\mathbf{S}}^{\max}\right)<1\Rightarrow\rho\left({\mathbf{S}}^{\max}\right)<1$;
which completes the proof.\hspace{\fill}\rule{1.5ex}{1.5ex}

\def\baselinestretch{0.9}
\normalsize

\end{document}